\providecommand{\algorithmname}{Algorithm}
\theoremstyle{plain}
\newtheorem{thm}{\protect\theoremname}
\theoremstyle{plain}
\newtheorem{assumption}[thm]{\protect\assumptionname}
\theoremstyle{plain}
\newtheorem{prop}[thm]{\protect\propositionname}
\theoremstyle{plain}
\setlist{nolistsep}
\providecommand{\assumptionname}{Assumption}
\providecommand{\lemmaname}{Lemma}
\providecommand{\propositionname}{Proposition}
\providecommand{\theoremname}{Theorem}
\begin{document}

\twocolumn[

\aistatstitle{Unbiased Smoothing using Particle Independent Metropolis--Hastings}
\aistatsauthor{ Lawrence Middleton  \And George Deligiannidis   \And Arnaud Doucet   \And Pierre E. Jacob  }
\aistatsaddress{ University of Oxford  \And  University of Oxford  \And University of Oxford \And   Harvard University  } ]

\begin{abstract}
We consider the approximation of expectations with respect to the
distribution of a latent Markov process given noisy measurements.
This is known as the smoothing problem
and is often approached with particle and Markov chain
Monte Carlo (MCMC) methods. These methods provide consistent but biased
estimators when run for a finite time.
We propose a simple way of coupling two MCMC chains
built using Particle Independent Metropolis--Hastings (PIMH)
to produce unbiased smoothing estimators.
Unbiased estimators are appealing in the context
of parallel computing, and facilitate the construction of confidence intervals.
The proposed scheme only requires access to off-the-shelf Particle
Filters (PF) and is thus easier to implement than recently
proposed unbiased smoothers. The approach
is demonstrated on a L\'evy-driven stochastic volatility model and a stochastic kinetic model.
\end{abstract}

\section{Introduction}

\subsection{State-space models, problem statement and review}
Let $t$ denote a discrete-time index. State-space models are defined
by a latent Markov process $(X_{t})_{t\ge1}$ and observation process
$(Y_{t})_{t\ge1}$, $(X_{t},Y_{t})$ taking values in a measurable
space $(\mathsf{X}\times\mathsf{Y},\mathcal{B}(\mathsf{X})\otimes\mathcal{B}(\mathsf{Y}))$
and satisfying
\[
X_{t+1}|\{X_{t}=x\}\sim f(\cdot|x),\enskip Y_{t}|\{X_{t}=x\}\sim g(\cdot|x),\enskip
\]
for $t\ge1$ with $X_{1}\sim\mu(\cdot)$. In the following we assume that $\mathsf{X}\subseteq\mathbb{R}^{d_{x}}$
and $\mathsf{Y}\subseteq\mathbb{R}^{d_{y}}$,
and use $f(\cdot|x)$, $g(\cdot|x)$ and $\mu(\cdot)$ to denote densities
with respect to the corresponding Lebesgue measure. State inference
given a realization of the observations $Y_{1:T}=y_{1:T}$ for some
fixed $T\in\mathbb{N}$ requires the posterior density
\begin{align*}
\pi\left(x_{1:T}\right):=p(x_{1:T}|y_{1:T})\qquad\qquad\qquad\qquad\\
\propto\mu(x_{1})g(y_{1}|x_{1})\prod_{t=2}^{T}f(x_{t}|x_{t-1})g(y_{t}|x_{t}),
\end{align*}
and expectations w.r.t. to this density. This is known as smoothing in the literature. For non-linear non-Gaussian state-space models,
this problem is complex as this posterior and its normalizing constant
$p(y_{1:T})$, often called somewhat abusively likelihood,
are intractable. We provide here a means to obtain unbiased
estimators of $\pi(h):=\int h(x_{1:T})\pi(x_{1:T})\mathrm{d}x_{1:T}$ for some function $h:\mathsf{X}^{T}\rightarrow\mathbb{R}$.

Particle methods return asymptotically consistent estimators which
are however biased for a finite number of particles. Similarly MCMC
kernels, such as the iterated Conditional Particle Filter (i-CPF)
and PIMH \cite{andrieu2010particle}, can be used for approximating
smoothing expectations consistently but are also biased for a finite
number of iterations. Additionally, although theoretical bounds on
the bias are available for particle \cite{delmoral2007bias}, PIMH \cite{andrieu2010particle} and
i-CPF \cite{chopin2015particle,andrieu2018uniform,lee2018coupled} estimators, these bounds are usually not sharp and/or rely on strong mixing assumptions which are not satisfied by most realistic models. Unbiased
estimators of $\pi(h)$ computed on parallel machines can be combined
into asymptotically valid confidence intervals
as either the number of machines, the time budget, or both go to infinity \cite{Glynn1991}.

Recently, it has been shown in \cite{jacob2017smoothing,lee2018coupled}
that it is possible to obtain such unbiased estimators by combining
the i-CPF algorithm with a debiasing scheme for MCMC algorithms proposed
initially in \cite{glynn2014exact} and further developed in \cite{jacob2017unbiased}.
These unbiased smoothing schemes couple two i-CPF kernels using common
random numbers and a coupled resampling scheme. After a brief review
of particle methods and of PIMH, we propose in Section \ref{sec:Methodology}
an alternative methodology relying on coupling two PIMH kernels.
The method is easily implementable as it does not require any modification of the
PF algorithm.
It can also
be used in scenarios where simulation from the Markov transition
kernel of the latent process involves a random number of random variables,
whereas the
methods proposed in \cite{jacob2017smoothing,lee2018coupled} would not be directly applicable to these settings. Additionally it
does not require being able to evaluate pointwise the transition density contrary to the coupled conditional backward sampling PF scheme of \cite{lee2018coupled}. Section \ref{sec:guidelines}
presents an analysis of the methodology when $T$ is large. In Section \ref{sec:Numerical-experiments}, the method
is demonstrated on a L\'evy-driven stochastic volatility
model and a stochastic kinetic model.\footnote{Code to reproduce figures is provided at \url{https://github.com/lolmid/coupled_pimh}}

\subsection{Particle methods}

Particle methods are often used to approximate smoothing expectations \cite{doucet2009tutorial,kantas2015particle}.
Such methods rely on sampling, weighting and resampling a set of $N$
weighted particles $(X_{t}^{i},W_{t}^{i})$, where $X_{t}^{i}\in\mathsf{X}$
denotes the value of the $i^{th}$ particle at iteration $t$ and
$W_{t}^{i}$ its corresponding normalized weight, i.e. $\sum_{i=1}^{N}W_{t}^{i}=1$.
Letting $q_{1}(x_{1})$, $q_{t}(x_{t}|x_{t-1})$, denote the proposal
density at time $t=1$ and at time $t\geq2$ respectively, weighting
occurs according to the following `incremental weights':
\begin{align*}
w_{1}(x_{1})&:=\frac{g(y_{1}|x_{1})\mu(x_{1})}{q_{1}(x_{1})}\quad\text{for}\quad t=1,\\
w_{t}(x_{t-1},x_{t})&:=\frac{g(y_{t}|x_{t})f(x_{t}|x_{t-1})}{q_{t}(x_{t}|x_{t-1})}\quad\text{for}\quad t\geq 2.
\end{align*}
We assume that $w_{1}(x_{1})>0$ and $w_{t}(x_{t-1},x_{t})>0$ for $t=2,...,T$ and all $x_{1:T}$. Pseudo-code for a standard PF is presented in Algorithm
\ref{alg:pf} where we let $r(\cdot|\mathbf{W}_{t})$, with $\mathbf{W}_{t}:=\left(W_{t}^{1},...,W_{t}^{N}\right)$, denote the resampling distribution, a probability distribution on
$[N]^{N}$ where $[N]:=\{1,...,N\}$. We say that a resampling scheme is unbiased if $\sum_{i=1}^{N}r(A_{t}^{i}=k|\mathbf{W}_{t})=NW_t^{k}$.
All standard resampling schemes -multinomial, residual and systematic- are unbiased \cite{doucet2009tutorial}.
This PF procedure outputs an approximation $p_{N}(y_{1:T})$ of $p(y_{1:T})$ and an approximation
$\pi_{N}(\mathrm{d}x_{1:T})$ of the smoothing distribution $\pi(\mathrm{d}x_{1:T})$.
Under weak assumptions, it can be shown that $p_{N}(y_{1:T})$,
resp. $\pi_{N}(h):=\sum_{i=1}^{N}W_{T}^{i}h\left(X_{1:T}^{i}\right)$,
is an asymptotically consistent (in $N$) estimator of $p(y_{1:T})$,
resp. of $\pi(h)$. However, whereas $p_{N}(y_{1:T})$ is unbiased
(\cite{Del_Moral_2004}, Section 7.4.1), $\pi_{N}(h)$ admits an asymptotic bias
of order $C/N$ for a constant $C$ which is typically impossible
to evaluate and for which only loose bounds are available under realistic
assumptions \cite{delmoral2007bias}. In the following by a call to
the PF, $(X_{1:T},p_{N})\sim\text{PF}$, we mean
a procedure which runs Algorithm \ref{alg:pf} and returns $p_{N}:=p_{N}(y_{1:T})$
(dependence on observations is notationally omitted) and a sample
from the approximate smoothing distribution $X_{1:T}\sim\pi_{N}\left(\cdot\right)$, i.e. output $X_{1:T}^i$ with probability $W_T^i$.

\begin{algorithm}
\caption{Particle Filter\label{alg:pf}}
\textsf{For $i\in[N]$, sample }$X_{1}^{i}\stackrel{i.i.d.}{\sim}q_{1}(\cdot)$\textsf{,
compute weights $W_{1}^{i}\propto w_{1}(X_{1}^{i})$ and set ${p}_{N}(y_{1})=\frac{1}{N}\sum_{i=1}^{N}w_{1}(X_{1}^{i})$.}
\textsf{For $2\le t\le T$:}
\begin{enumerate}
\item \textsf{Sample particle ancestors $A_{t-1}^{1:N}\sim r(\cdot|\mathbf{W}_{t-1})$.}
\item \textsf{For $i\in[N]$, sample $X_{t}^{i}\sim q_{t}(\cdot|X_{t-1}^{A_{t-1}^{i}})$
and set $X_{1:t}^{i}=\{X_{1:t-1}^{A_{t-1}^{i}}, X_{t}^{i}\}$.}
\item \textsf{Compute weights $W_{t}^{i}\propto w_{t}(X_{t-1}^{A_{t-1}^{i}},X_{t}^{i})$
and set $p_{N}(y_{1:t})=p_{N}(y_{1:t-1})\cdot\frac{1}{N}\sum_{i=1}^{N}w_{t}(X_{t-1}^{A_{t-1}^{i}},X_{t}^{i})$.}
\end{enumerate}
\textsf{Output $\pi_{N}(\mathrm{d}x_{1:T}):=\sum_{i=1}^{N}W_{T}^{i}\delta_{X_{1:T}^{i}}(\mathrm{d}x_{1:T})$
and $p_{N}(y_{1:T})$.}
\end{algorithm}

\subsection{PIMH method}

An alternative way to estimate $\pi(h)$ consists of using an MCMC scheme targeting $\pi$. The PIMH algorithm achieves this by building a Markov chain on an extended space admitting a stationary distribution $\bar{\pi}$ with marginal $\pi$ using the PF described in Algorithm \ref{alg:pf}
as proposal distribution \cite{andrieu2010particle}. Algorithm \ref{alg:pimh} provides pseudo-code
for sampling the PIMH kernel,  where $(X_{1:T}^{(n)},p_{N}^{(n)})$
denotes the current state of this Markov chain and $a\wedge b$ means $\min(a,b)$.

\begin{algorithm}
\caption{PIMH kernel $P((X_{1:T}^{(n)},p_{N}^{(n)}),(\cdot,\cdot))$\label{alg:pimh}}

\begin{enumerate}
\item \textsf{Sample $(X_{1:T}^{*},p_{N}^{*})\sim\text{PF}$. }
\item \textsf{Set $(X_{1:T}^{(n+1)},p_{N}^{(n+1)})=(X_{1:T}^{*},p_{N}^{*})$
with probability }$\alpha(p_{N}^{(n)},p_{N}^{*}):=1\wedge\frac{p_{N}^{*}}{p_{N}^{(n)}}.$
\item \textsf{Otherwise set $(X_{1:T}^{(n+1)},p_{N}^{(n+1)})=(X_{1:T}^{(n)},p_{N}^{(n)}).$}
\end{enumerate}
\end{algorithm}
Validity and convergence properties of PIMH rely on viewing
it as an Independent Metropolis--Hastings (IMH) sampler on an extended space. For ease of presentation, we detail this construction for an unbiased resampling scheme satisfying additionally $r(A_t^{i}=k|\mathbf{W}_{t})=W_{t}^{k}$ for all $i,k\in[N]$. The PF of Algorithm \ref{alg:pf}
implicitly defines a distribution $\psi$ over $N\times T$ particle
coordinates and $N\times(T-1)$ ancestors. We use $\zeta$ to denote
a sample from $\psi$, where $\zeta\in\mathcal{X}:=\mathsf{X}^{NT}\times\{1,...,N\}^{N(T-1)}$,
and the density of $\psi$ is given by
\[
\psi(\zeta)=\left(\prod_{i=1}^{N}q_{1}(x_{1}^{i})\right)\prod_{t=2}^{T}\left(r(a_{t-1}^{1:N}|\mathbf{w}_{t-1})\prod_{i=1}^{N}q(x_{t}^{i}|x_{t-1}^{a_{t-1}^{i}})\right).
\]
We let $b_{t}^{j}$ denote the index of the ancestor particle of $x_{1:T}^{j}$
at generation $t$, which may be obtained deterministically from the
ancestry, using $b_{t}^{j}=a_{t}^{b_{t+1}^{j}}$ with
$b_{T}^{j}=j$. From \cite{andrieu2010particle}, for $(j,\zeta)\in\{1,...,N\}\times\mathcal{X}$,
we express the target $\bar{\pi}(j,\zeta)$ of the resulting IMH sampler
as
\begin{align*}
\bar{\pi}(j,\zeta) & =\frac{\pi(x_{1:T}^{j})}{N^{T}}\frac{\psi(\zeta)}{q_{1}(x_{1}^{b_{1}^{j}})\prod_{t=2}^{T}r(b_{t-1}^{j}|\mathbf{w}_{t-1})q_{t}(x_{t}^{b_{t}^{j}}|x_{t-1}^{b_{t-1}^{j}})}
\end{align*}
Running a PF and sampling from $\pi_{N}$ corresponds to sampling from the proposal $\bar{q}(j,\zeta)=\psi(\zeta)W_{T}^{j}$
and \cite[Theorem 2]{andrieu2010particle} shows that $\bar{\pi}(j,\zeta)/\bar{q}(j,\zeta)=p_{N}(y_{1:T})/p(y_{1:T})$
where for a given $(j,\zeta)$ we understand $p_{N}(y_{1:T})$
as a deterministic map from $\{1,...,N\}\times\mathcal{X}$ to $\mathbb{R}^{+}$.
As a result, samples from $\bar{\pi}$ can be obtained asymptotically
through accepting proposals with probability $\alpha(p_{N}^{(n)},p_{N}^{*})$.
We can thus estimate $\pi(h)$ by averaging over iterations $h(X_{1:T}^{n})$.
As shown in \cite[Theorem 6]{andrieu2010particle},
we can also estimate $\pi(h)$ by averaging over iterations $\pi_{N}^{(n)}(h)=\sum_{i=1}^{N}W_{T}^{i,n}h (X_{1:T}^{i,n})$, hence reusing all the particle
system used to generate the \emph{accepted }proposal
at iteration $n$ as $\bar{\pi}(\pi_{N}(h))=\pi(h)$.
As such, although Algorithms \ref{alg:pimh} and \ref{alg:couplepimh}
are stated in terms of proposing and accepting
$(X_{1:T},p_{N})$, it is possible to consider them instead as proposing
and accepting $(J,\zeta)$. In \cite{leemurrayjohansen2019}, it is shown that $r(A_t^{i}=k|\mathbf{W}_{t})=W_{t}^{k}$ is unnecessary. We only need to use an unbiased resampling scheme to obtain a valid PIMH scheme. This is achieved by defining an alternative target $\bar{\pi}$ on $\{1,...,N\}\times\mathcal{X}$ such that $X_{1:T}^{J}\sim \pi$ under $\bar{\pi}$ and $\bar{\pi}(j,\zeta)/\bar{q}(j,\zeta)=p_{N}(y_{1:T})/p(y_{1:T})$ also hold. It is also established in \cite{leemurrayjohansen2019} that one can even used adaptive resampling procedures \cite{doucet2009tutorial,del2012adaptive}.

If we denote by $Z$ the error of the log-likelihood estimator, i.e.
$Z:=\log\{p_{N}(y_{1:T})/p(y_{1:T})\}$, the PIMH algorithm induces
a Markov chain with transition kernel $Q(z,\mathrm{d}z')$ given by
\begin{equation}
\left\{1\wedge\exp(z'-z)\right\} g(\mathrm{d}z')+\left\{ 1-\alpha(z)\right\} \delta_{z}(\mathrm{d}z'),\label{eq:PforZ}
\end{equation}
where $g(\mathrm{d}z)$ is the distribution of $Z$ under the law of the particle
filter and $\alpha(z):=\int\left\{ 1\wedge\exp(z'-z)\right\} g(\mathrm{d}z')$
is the average acceptance probability from state $z$. Although not
emphasized notationally, both $g(z)$ and $\alpha(z)$ are functions
of $N$. Through an abuse of notation, we denote the invariant density
of the above chain with $\pi(z)=g\left(z\right)\exp\left(z\right)$.
Such a reparameterization has also been used in previous work to analyze
the PIMH and the related particle marginal MH algorithm \cite{pitt2012some,doucet2015biometrika}.

\section{Methodology\label{sec:Methodology}}

\subsection{Unbiased MCMC via couplings}\label{sec:couplingMCMC}

`Exact estimation' methods provide unbiased estimators
of expectations with respect to the stationary distribution of a Markov
chain using coupling techniques \cite{glynn2014exact,jacob2017unbiased}. In the following
we use these tools to couple PIMH kernels and estimate smoothing expectations,
after briefly reviewing the general approach.

Consider two $\mathsf{U}-$valued
Markov chains, $U:=(U^{(n)})_{n\ge0}$ and $\tilde{U}:=(\tilde{U}^{(n)})_{n\ge0}$,
each evolving marginally according to a kernel $K(u,\mathrm{d}u')$ with stationary
distribution $\lambda$ and initialized from $\eta$ so that $U^{(n)}\stackrel{d}{=}\tilde{U}^{(n)}$
for all $n\geq1$, where $\stackrel{d}{=}$ denotes equality in distribution.
We couple these two chains so that $U^{(n)}=\tilde{U}^{(n-1)}$
for $n\ge\tau$, where $\tau$ is an almost surely finite meeting
time. In this case, we see that for non-negative integers $k<t$ we
have the following telescoping-sum decomposition
\begin{align*}
\mathbb{E}[h(U^{(t)})] =&\mathbb{E}[h(U^{(k)})+\sum_{l=k+1}^{t}(h(U^{(l)})-h(U^{(l-1)}))] \\
=&\mathbb{E} [h(U^{(k)})+\sum_{l=k+1}^{t\land(\tau-1)} (h(U^{(l)})-h(\tilde{U}^{(l-1)}))].
\end{align*}
As a result, under integrability conditions which will be made precise
in the sequel, taking the limit as $t\rightarrow\infty$ suggests
an estimator $H_{k}(U,\tilde{U}):=h(U^{(k)})+\sum_{l=k+1}^{\tau-1}(h(U^{(l)})-h(\tilde{U}^{(l-1)}))$
with expectation $\lambda(h)$. A way to construct such chains considered
in \cite{glynn2014exact,jacob2017unbiased,heng2017unbiased,middleton2018unbiased}
relies on sampling independently $(U^{(0)},U^{(1)})\sim \eta(\mathrm{d}u_{0})K(u{}_{0},\mathrm{d}u_{1})$
and $\tilde{U}^{(0)}\sim \eta(\mathrm{d}u_{0})$, and then successively
sampling  $(U^{(n+1)},\tilde{U}^{(n)})$ from $\bar{K}((U^{(n)},\tilde{U}^{(n-1)}),(\cdot,\cdot))$ such that $U^{(n+1)}=\tilde{U}^{(n)}$
with positive probability, ensuring that both chains evolve marginally
according to $K$. Furthermore,
averaging $H_{l}(U,\tilde{U})$ over a range of values $l\in\{k,k+1,...,m\}$,
for some $m\geq k$, preserves unbiasedness, suggesting the following
`time-averaged' unbiased estimator of $\mu(h)$
\begin{align}
H_{k:m}  =\frac{1}{m-k+1}\sum_{l=k}^{m}h(U^{(l)})+\nonumber \enskip \qquad\qquad\qquad \\  \sum_{l=k+1}^{\tau-1}\min\left(1,\frac{l-k}{m-k+1}\right)(h(U^{(l)})-h(\tilde{U}^{(l-1)}))\nonumber \\
  :=\text{MCMC}_{k:m}(h)+\text{BC}_{k:m}(h).\label{eq:timeaveraged}
\end{align}
We view $\text{MCMC}_{k:m}(h)$ as the standard MCMC sample average
up to time $m$, discarding the first $k$ iterates as `burn-in'.
The second term, $\text{BC}_{k:m}(h)$, is the `bias correction' term
with $\text{BC}_{k:m}(h):=0$ if $\tau-1<k+1$. The estimator requires
only that two chains be simulated until meeting at time $\tau$, after
which, if $\tau<m$, only one chain must be simulated up to $m$.
By \cite[Proposition 3.1]{jacob2017unbiased}, the validity of the
resulting estimators is guaranteed under the following assumptions.
We discuss these assumptions for PIMH in Section \ref{sec:validity}.
\begin{assumption}
\label{assumption:marginal}Each chain is initialized marginally from
a distribution $\eta$, evolves according to a kernel $K$, and is
such that $\mathbb{E}[h(U^{(n)})]\to\lambda(h)$ as $n\to\infty$. Furthermore,
there exist constants $\eta>0$ and $D<\infty$ such that $\mathbb{E}[\left|h(U^{(n)})\right|{}^{2+\eta}]<D$
for all $n\geq0$.
\end{assumption}
\begin{assumption}
\label{assumption:meetingtime}The two chains are such that the meeting
time $\tau=\inf\{n\geq1:\ U^{(n)}=\tilde{U}^{(n-1)}\}$ satisfies $\mathbb{P}[\tau>n]\leq D\delta^{n}$,
for some constants $0<D<\infty$ and $\delta\in(0,1)$. The chains
stay together after meeting, i.e. $U^{(n)}=\tilde{U}^{(n-1)}$ for all
$n\geq\tau$.
\end{assumption}
\begin{prop}\cite{jacob2017unbiased}\ Under Assumptions \ref{assumption:marginal}
and \ref{assumption:meetingtime}, the estimator $H_{k:m}$ obtained by these coupled chains is unbiased, has finite variance and finite expected
cost.
\end{prop}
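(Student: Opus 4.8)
The plan is to establish three separate conclusions of the proposition---unbiasedness, finite variance, and finite expected cost---building each on the telescoping-sum identity and the two assumptions. Throughout I would treat the estimator in the form $H_{k:m} = \text{MCMC}_{k:m}(h) + \text{BC}_{k:m}(h)$ from \eqref{eq:timeaveraged}, since the cost and tail-probability arguments attach cleanly to the meeting time $\tau$ while the unbiasedness argument is most transparent for the time-indexed estimators $H_l$ before averaging over $l \in \{k,\dots,m\}$.

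First I would verify unbiasedness. The starting point is the telescoping decomposition displayed just above Assumption \ref{assumption:marginal}, namely $\mathbb{E}[h(U^{(t)})] = \mathbb{E}[H_k^{(t)}]$ where $H_k^{(t)}$ is the truncated estimator stopped at $t \wedge (\tau - 1)$. The key step is to justify exchanging the limit $t \to \infty$ with the expectation, so that $\lim_{t\to\infty}\mathbb{E}[h(U^{(t)})] = \lambda(h)$ (given by the convergence clause of Assumption \ref{assumption:marginal}) equals $\mathbb{E}[H_k]$. This requires dominated convergence, which I would supply by bounding each increment $|h(U^{(l)}) - h(\tilde U^{(l-1)})|$ using the uniform $(2+\eta)$-moment bound $D$ together with the geometric tail $\mathbb{P}[\tau > n] \le D\delta^n$ from Assumption \ref{assumption:meetingtime}; a Hölder/Cauchy--Schwarz argument shows $\sum_l \mathbb{E}[|h(U^{(l)}) - h(\tilde U^{(l-1)})|\,\mathbf{1}\{\tau > l\}] < \infty$, which dominates the truncated sums uniformly in $t$. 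Unbiasedness of each $H_l$ then transfers to the average $H_{k:m}$ by linearity of expectation.

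Next I would bound the variance. Since $\text{MCMC}_{k:m}(h)$ is a bounded-length average of terms with uniformly bounded $(2+\eta)$-moments, it trivially has finite variance, so the work is in controlling $\text{BC}_{k:m}(h)$. I would show $\mathbb{E}[\text{BC}_{k:m}(h)^2] < \infty$ by writing it as a finite-until-$\tau$ weighted sum, bounding the weights $\min(1, (l-k)/(m-k+1))$ by $1$, and applying the Minkowski/triangle inequality in $L^2$ together with the same Hölder split that separates the $(2+\eta)$-moment of the increments from the geometric tail of $\mathbb{P}[\tau > l]$. Choosing the Hölder exponents so that the moment is raised to the power $2$ (well inside $2+\eta$) against a summable geometric factor $\delta^{\,c l}$ yields a convergent series, giving a finite $L^2$ bound; finite variance then follows from finite second moment. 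Finite expected cost is the easiest part: the cost is proportional to $\max(m,\tau)$ (two chains run until $\tau$, one chain thereafter up to $m$), and $\mathbb{E}[\tau] = \sum_{n\ge 0}\mathbb{P}[\tau > n] \le \sum_{n\ge 0} D\delta^n = D/(1-\delta) < \infty$ directly from Assumption \ref{assumption:meetingtime}.

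The main obstacle I anticipate is the integrability bookkeeping in the unbiasedness and variance steps: making the interchange of limit and expectation rigorous, and extracting a summable bound on the increment series, both hinge on coupling the $(2+\eta)$-moment control with the geometric meeting-time tail through a carefully chosen Hölder exponent. The reason the assumptions specify $2+\eta$ rather than a bare second moment is precisely to leave slack for this Hölder argument; the exponent on the increment moment must be strictly below $2+\eta$ so that its conjugate exponent multiplies the geometric tail and preserves summability. Since this proposition is quoted from \cite{jacob2017unbiased}, I would in practice cite that reference for the detailed estimates, but the sketch above---dominated convergence for unbiasedness, a Minkowski-plus-Hölder bound for the second moment, and a geometric-series evaluation of $\mathbb{E}[\tau]$ for the cost---captures the full argument.
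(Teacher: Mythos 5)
Your proposal is correct, and it reconstructs essentially the argument behind \cite[Proposition 3.1]{jacob2017unbiased}, which is all the paper itself offers (the proposition is stated by citation, with no independent proof in the text): telescoping identity plus dominated convergence for unbiasedness, Minkowski combined with a H\"older split of the $(2+\eta)$-moment against the geometric tail of $\tau$ for finite variance, and summing the geometric tail for finite expected cost. Nothing further is needed.
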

\subsection{Coupled PIMH}\label{sec:coupledPIMH}
To obtain unbiased estimators of $\pi(h)$, we use the framework detailed in Section \ref{sec:couplingMCMC} for
$K=P$ the transition kernel of PIMH and $\lambda=\bar{\pi}$ the corresponding invariant distribution of PIMH
defined on $\mathsf{U}:=\{1,...,N\}\times\mathcal{X}$. This requires introducing a coupling of PIMH kernels.
For IMH chains, a natural choice of initialization of the chain is from the proposal
distribution. We adopt this in the following, sampling both \textsf{$(X_{1:T}^{(0)},p_{N}^{(0)})\sim\text{PF}$}
and \textsf{$(\tilde{X}_{1:T}^{(0)},\tilde{p}_{N}^{(0)})\sim\text{PF}$}.
However, in contrast to previous constructions \cite{jacob2017unbiased,heng2017unbiased,middleton2018unbiased},
our method allows the chains to couple at time $\tau=1$
through re-using the initial value $(\tilde{X}_{1:T}^{(0)},\tilde{p}_{N}^{(0)})$
as a proposal for the first chain. %

We summarize the resulting procedure in Algorithm \ref{alg:couplepimh}. To simplify presentation of the algorithm, we use the convention $\alpha(\tilde{p}_{N}^{(-1)},p_{N}^{*}):=1$.
Finally, although the coupling scheme is framed in terms of PIMH, it is clear that this
algorithm and estimators apply equally to any IMH algorithm.

\begin{algorithm}[H]
\caption{Coupled PIMH\label{alg:couplepimh}}

\begin{enumerate}
\item \textsf{Sample $(X_{1:T}^{(0)},p_{N}^{(0)})\sim\text{PF}$, set $n=1$ and $\tau=\infty$.}
\item \textsf{While $n<\max(m,\tau)$}
\begin{enumerate}
\item \textsf{Sample $(X_{1:T}^{*},p_{N}^{*})\sim\text{PF}$.}
\item \textsf{Sample $\mathit{\mathfrak{u}}\sim\mathcal{U}\left[0,1\right]$.}
\item \textsf{If $\mathit{\mathfrak{u}}\le\alpha\left(p_{N}^{(n-1)},p_{N}^{*}\right)$
set \[(X_{1:T}^{(n)},p_{N}^{(n)})=(X_{1:T}^{*},p_{N}^{*})\]
else
$(X_{1:T}^{(n)},p_{N}^{(n)})=(X_{1:T}^{(n-1)},p_{N}^{(n-1)}).$}
\item \textsf{If $\mathit{\mathfrak{u}}\le\alpha(\tilde{p}_{N}^{(n-2)},p_{N}^{*})$
set
\[
(\tilde{X}_{1:T}^{(n-1)},\tilde{p}_{N}^{(n-1)})=(X_{1:T}^{*},p_{N}^{*})
\]
else $(\tilde{X}_{1:T}^{(n-1)},\tilde{p}_{N}^{(n-1)})=(\tilde{X}_{1:T}^{(n-2)},\tilde{p}_{N}^{(n-2)}).$}
\item \textsf{If $\mathit{\mathfrak{u}}\le\alpha\left(p_{N}^{(n-1)},p_{N}^{*}\right)\wedge\alpha(\tilde{p}_{N}^{(n-2)},p_{N}^{*})$
 set $\tau=n$.}
\item \textsf{Set  $n \leftarrow n+1$.}
\end{enumerate}
\item \textsf{Return $H_{k:m}$ as in (\ref{eq:timeaveraged})}.
\end{enumerate}
\end{algorithm}

At each iteration $n$ of Algorithm \ref{alg:couplepimh} the proposal
$(X_{1:T}^{*},p_{N}^{*})$ can be accepted by one, both or neither
chains with positive probability. The meeting time $\tau$
corresponds to the first time the proposal $(X_{1:T}^{*},p_{N}^{*})$
is accepted by both chains. We can also return another unbiased estimator $\bar{H}_{k:m}$ of the form (\ref{eq:timeaveraged})
with $h(X_{1:T})$ replaced by $\pi_{N}(h)$ as
we have previously seen that $\bar{\pi}(\pi_{N}(h))=\pi(h)$.
The Rao-Blackwellised estimator $\bar{H}_{k:m}$ will typically outperform significantly $H_{k:m}$ when we are interested in
smoothing expectations of functions of states close to $T$. For functions of states close to the origin, e.g. $h(x_{1:T})=x_{1}$, we have
$\bar{H}_{k:m} = H_{k:m}$ with high probability if $N$ is moderate because of the particle degeneracy problem \cite{doucet2009tutorial,jacob2015path,kantas2015particle}.
 This is illustrated in Appendix \ref{subsec:raoblackstochkinetic}.

\subsection{Validity and meeting times\label{sec:validity}}
The validity of our unbiased estimators is ensured if the following weak assumptions are satisfied.
\begin{assumption}
\label{assumption:marginaldistributions}There exist constants $\eta>0$
and $D<\infty$ such that $\mathbb{E}\left[|h(X_{1:T}^{(n)})|{}^{2+\eta}\right]<D$
for all $n\geq0$.
\end{assumption}
\begin{assumption}
\label{assumption:marginaldistributionsjoint}There exist constants $\eta>0$
and $D<\infty$ such that
$\mathbb{E}\left[|\sum_{i=1}^N W_T^{i,(n)} h(X_{1:T}^{i,(n)})|{}^{2+\eta}\right]<D$
for all $n\geq0$.
\end{assumption}
\begin{assumption}
\label{assumption:pf}The resampling scheme is unbiased and there exist
finite constants $(\bar{w_{t}})_{t=1}^{T}$ such that $\sup_{x\in\mathsf{X}}$
$w_{1}(x)\le\bar{w}_{1}$ and $\sup_{(x,x')\in\mathsf{X}\times\mathsf{X}}$
$w_{t}(x,x')\le\bar{w}_{t}$ for $t\in\{2,...,T\}$.
\end{assumption}

\begin{prop}
\label{thm:validity} Under Assumptions \ref{assumption:marginaldistributions}
and \ref{assumption:pf}, resp. Assumptions \ref{assumption:marginaldistributionsjoint}
and \ref{assumption:pf}, the estimator $H_{k:m}$, resp. $\bar{H}_{k:m}$, of $\pi(h)$ obtained from Algorithm
\ref{alg:couplepimh} is unbiased and has finite variance and finite expected
cost.
\end{prop}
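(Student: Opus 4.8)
The plan is to reduce the statement to the general result quoted from \cite[Proposition 3.1]{jacob2017unbiased} by verifying that the coupled chains produced by Algorithm \ref{alg:couplepimh} satisfy Assumptions \ref{assumption:marginal} and \ref{assumption:meetingtime}, with $K=P$, $\eta$ the PF proposal, and $\lambda=\bar\pi$. The single place where Assumption \ref{assumption:pf} enters is the observation that the bounded incremental weights give a deterministic upper bound on the likelihood estimator: almost surely
\[
p_N(y_{1:T})=\prod_{t=1}^{T}\frac{1}{N}\sum_{i=1}^{N}w_t(\cdot)\le\prod_{t=1}^{T}\bar w_t=:\bar p<\infty.
\]
Since the weights are strictly positive we have $p(y_{1:T})>0$, so this bounds the importance ratio $\bar\pi/\bar q=p_N/p(y_{1:T})$ uniformly, which is exactly the bounded-weight condition under which independent Metropolis--Hastings is uniformly ergodic.

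To check Assumption \ref{assumption:marginal}, I would first note that by construction each chain proposes a fresh $(X_{1:T}^{*},p_N^{*})\sim\text{PF}$ and accepts it according to $\alpha$ through a uniform variate, so each evolves marginally as the PIMH kernel $P$ and is started from $\eta$; the shared proposal and shared uniform serve only to couple the two chains. Uniform ergodicity of $P$, which follows from the bounded ratio above, gives $\mathbb{E}[h(X_{1:T}^{(n)})]\to\bar\pi(h)=\pi(h)$, and likewise $\mathbb{E}[\pi_N^{(n)}(h)]\to\pi(h)$ using $\bar\pi(\pi_N(h))=\pi(h)$. The required uniform $(2+\eta)$-moment bound is then precisely Assumption \ref{assumption:marginaldistributions} for $H_{k:m}$, respectively Assumption \ref{assumption:marginaldistributionsjoint} for $\bar H_{k:m}$.

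The main work is Assumption \ref{assumption:meetingtime}. Because the two chains share both the proposal $p_N^{*}$ and the uniform $\mathfrak{u}$, the meeting event at step $n$ is exactly $\{\mathfrak{u}\le\alpha(p_N^{(n-1)},p_N^{*})\wedge\alpha(\tilde p_N^{(n-2)},p_N^{*})\}$, i.e.\ both chains accept the common proposal. Every reachable current value is either the initial PF draw or a previously accepted PF draw, hence $p_N^{(n-1)},\tilde p_N^{(n-2)}\le\bar p$, and therefore, uniformly in the current states,
\[
\alpha(p_N^{(n-1)},p_N^{*})\wedge\alpha(\tilde p_N^{(n-2)},p_N^{*})\ge 1\wedge\frac{p_N^{*}}{\bar p}.
\]
Taking the expectation over the proposal yields a state-independent lower bound $\epsilon:=\mathbb{E}[1\wedge(p_N^{*}/\bar p)]$ on the one-step meeting probability, and $\epsilon>0$ since $p_N^{*}>0$ almost surely and $\bar p<\infty$. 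Iterating over steps gives the geometric tail $\mathbb{P}[\tau>n]\le(1-\epsilon)^{n}$, i.e.\ Assumption \ref{assumption:meetingtime} with $\delta=1-\epsilon$. Faithfulness holds as well: once $p_N^{(n)}=\tilde p_N^{(n-1)}$ the two acceptance thresholds coincide at every later step, so the chains take identical decisions under the common $(p_N^{*},\mathfrak{u})$ and remain equal, giving $U^{(n)}=\tilde U^{(n-1)}$ for all $n\ge\tau$.

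With both assumptions in place, \cite[Proposition 3.1]{jacob2017unbiased} delivers unbiasedness, finite variance and finite expected cost of $H_{k:m}$ and $\bar H_{k:m}$. I expect the delicate point to be the uniform lower bound $\epsilon>0$ on the per-step meeting probability: everything hinges on the almost-sure bound $\bar p$ on $p_N$ supplied by Assumption \ref{assumption:pf}, without which the meeting probability could degenerate and the geometric tail fail. The remaining steps are essentially bookkeeping, the one subtlety being to confirm that the lagged initialization, together with the convention $\alpha(\tilde p_N^{(-1)},\cdot):=1$, renders each marginal chain a genuine PIMH chain started from $\eta$.
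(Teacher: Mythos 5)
Your proof is correct, and at the top level it follows the same strategy as the paper: reduce the claim to \cite[Proposition 3.1]{jacob2017unbiased} by checking Assumptions \ref{assumption:marginal} and \ref{assumption:meetingtime} for the coupled chains, with Assumption \ref{assumption:marginaldistributions} (resp.\ \ref{assumption:marginaldistributionsjoint}) supplying the moment condition and the algorithm's construction supplying faithfulness after meeting. Where you genuinely diverge is in the verification of the geometric tail of $\tau$. The paper does this by citation: Assumption \ref{assumption:pf} gives uniform ergodicity of PIMH via \cite[Theorem 3]{andrieu2010particle}, and then \cite[Proposition 3.4]{jacob2017unbiased} delivers $\mathbb{P}[\tau>n]\leq D\delta^{n}$. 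You instead prove it directly from the structure of Algorithm \ref{alg:couplepimh}: since every state held by either chain is a PF output, the almost-sure bound $p_{N}\leq\bar{p}:=\prod_{t=1}^{T}\bar{w}_{t}$ forces both acceptance probabilities at any step to be at least $1\wedge(p_{N}^{*}/\bar{p})$, so the per-step joint-acceptance (i.e.\ meeting) probability is at least $\epsilon:=\mathbb{E}\left[1\wedge(p_{N}^{*}/\bar{p})\right]>0$ uniformly over histories, whence $\mathbb{P}[\tau>n]\leq(1-\epsilon)^{n}$. This buys self-containedness and makes explicit the single point where Assumption \ref{assumption:pf} enters, at the cost of being specific to this common-proposal/common-uniform coupling, whereas the paper's citation-based route is shorter and applies to any coupling covered by the general machinery. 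Note also that your bound is a coarser version of what the paper establishes separately in Proposition \ref{prop:mtdist} via the monotonicity property (Proposition \ref{prop:stochasticdominatingcoupling}): there the chains are shown to meet exactly when the first chain first accepts, so that $\tau|Z^{(0)}$ is exactly geometric with success probability $\alpha(Z^{(0)})$, which under Assumption \ref{assumption:pf} is bounded below by precisely your $\epsilon$; your argument recovers the geometric tail without needing the dominance property.
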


To establish the result for $H_{k:m}$, note that Assumption \ref{assumption:marginaldistributions} and
our construction implies Assumption
\ref{assumption:marginal} for $\lambda=\bar{\pi}$. Assumption \ref{assumption:pf}
provides verifiable and sufficient conditions to ensure uniform ergodicity of PIMH \cite[~Theorem 3]{andrieu2010particle}.
Hence it follows from \cite[Proposition 3.4]{jacob2017unbiased} that the geometric bound on the tails of $\tau$ is satisfied.
Additionally, Algorithm \ref{alg:couplepimh} ensures that the chains stay together
for all $n\ge\tau$ so Assumption \ref{assumption:meetingtime} is satisfied. A similar reasoning provides the result for $\bar{H}_{k:m}$.

We have the following precise description of the distribution of the
meeting time $\tau$ for Algorithm \ref{alg:couplepimh}. Let $\text{Geo}(\gamma)$ denote the geometric distribution on the strictly positive integers
with success probability $\gamma$.
\begin{prop}
\label{prop:mtdist}
The meeting time satisfies $\tau|Z^{(0)}\sim\emph{\text{Geo}}(\alpha(Z^{(0)}))$ and $\mathbb{P}[\tau=1]\geq\frac{1}{2}$. Additionally, we have
$\lim_{N\rightarrow\infty}\mathbb{P}[\tau=1]=1$ under Assumption \ref{assumption:pf}.
\end{prop}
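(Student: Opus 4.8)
The plan is to work throughout in the reparameterised variable $Z=\log\{p_N(y_{1:T})/p(y_{1:T})\}$, for which the acceptance probability between a current value $z$ and a proposed value $z'$ is $\alpha(z,z')=1\wedge\exp(z'-z)$, strictly decreasing in $z$. Write $Z^{(0)}$ for the initial value of the first chain, $Z^*_n$ for the i.i.d.\ $g$-distributed proposal at iteration $n$, and $\mathfrak{u}_n$ for the shared uniform. The central step is a structural lemma: for every $n<\tau$ the first chain has not moved, $Z^{(n)}=Z^{(0)}$, and the second chain satisfies $\tilde Z^{(n-1)}\le Z^{(0)}$. I would prove this by induction on $n$. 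The base case is the convention $\alpha(\tilde p_N^{(-1)},\cdot)=1$, which corresponds to $\tilde Z^{(-1)}=-\infty\le Z^{(0)}$. For the inductive step, since $\tilde Z^{(n-2)}\le Z^{(0)}$ and $\alpha$ is decreasing in its first argument, $\alpha(Z^{(0)},Z^*_n)\le\alpha(\tilde Z^{(n-2)},Z^*_n)$, so the event ``both chains accept'' coincides with ``the first chain accepts'', namely $\{\mathfrak{u}_n\le\alpha(Z^{(0)},Z^*_n)\}$. Hence $n<\tau$ forces $\mathfrak{u}_n>\alpha(Z^{(0)},Z^*_n)$, which keeps $Z^{(n)}=Z^{(n-1)}=Z^{(0)}$ and, whenever the second chain does accept, forces $\exp(Z^*_n-Z^{(0)})<1$, i.e.\ $\tilde Z^{(n-1)}=Z^*_n<Z^{(0)}$; otherwise $\tilde Z^{(n-1)}=\tilde Z^{(n-2)}\le Z^{(0)}$. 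This closes the induction.

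Given the lemma, the meeting event at iteration $n$, conditional on $\{\tau>n-1\}$ and on $Z^{(0)}$, is exactly $\{\mathfrak{u}_n\le\alpha(Z^{(0)},Z^*_n)\}$, a threshold depending only on the fresh pair $(\mathfrak{u}_n,Z^*_n)$ and on the fixed $Z^{(0)}$, not on the past. Therefore, conditional on $Z^{(0)}$, the indicators of meeting at successive iterations are i.i.d.\ Bernoulli with success probability $\mathbb{E}_{Z^*_n}[\alpha(Z^{(0)},Z^*_n)]=\int\{1\wedge\exp(z'-Z^{(0)})\}\,g(\mathrm{d}z')=\alpha(Z^{(0)})$, which yields $\tau\mid Z^{(0)}\sim\mathrm{Geo}(\alpha(Z^{(0)}))$.

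For the bound $\mathbb{P}[\tau=1]\ge1/2$, I would integrate out $Z^{(0)}$. Since $Z^{(0)}$ and the first proposal $Z^*_1$ are i.i.d.\ draws from $g$, we have $\mathbb{P}[\tau=1]=\mathbb{E}[\,1\wedge\exp(Z^*_1-Z^{(0)})\,]$. Writing $A=Z^*_1$ and $B=Z^{(0)}$, the elementary identity $(1\wedge e^{a-b})+(1\wedge e^{b-a})=1+e^{-|a-b|}\ge1$ together with the exchangeability of $(A,B)$ shows that $\mathbb{E}[1\wedge e^{A-B}]$ and $\mathbb{E}[1\wedge e^{B-A}]$ are equal and sum to at least $1$, giving $\mathbb{P}[\tau=1]\ge1/2$.

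Finally, for $\lim_{N\to\infty}\mathbb{P}[\tau=1]=1$, I would use that Assumption~\ref{assumption:pf} (bounded incremental weights, unbiased resampling) guarantees $\mathbb{E}[(p_N/p-1)^2]=O(1/N)$, so $p_N/p\to1$ in probability and hence $Z\to0$ in probability by the continuous mapping theorem. Consequently $Z^*_1-Z^{(0)}\to0$ in probability, the bounded quantity $1\wedge\exp(Z^*_1-Z^{(0)})\to1$ in probability, and bounded convergence gives $\mathbb{P}[\tau=1]=\mathbb{E}[1\wedge\exp(Z^*_1-Z^{(0)})]\to1$. I expect the main obstacle to be the structural induction of the first paragraph: one must verify carefully that the common uniform together with the monotonicity of $\alpha$ pins the leading chain at $Z^{(0)}$ and keeps the lagging chain below it until they meet, since this is precisely what erases the path dependence and makes the per-step success probability a constant equal to $\alpha(Z^{(0)})$.
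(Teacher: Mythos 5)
Your proof is correct and follows essentially the same route as the paper: your structural induction (first chain pinned at $Z^{(0)}$, lagging chain below it, meeting at the first acceptance of the leading chain) is exactly the monotonicity property the paper proves as Proposition~\ref{prop:stochasticdominatingcoupling}, and the geometric law, the exchangeability argument for $\mathbb{P}[\tau=1]\geq\tfrac12$, and the $N\to\infty$ limit via convergence of $Z$ to $0$ plus bounded/dominated convergence all match the paper's proof. The only cosmetic difference is that you justify $Z\to 0$ via an $L^2$ rate and convergence in probability, where the paper invokes almost sure convergence from \cite{Del_Moral_2004}; either suffices.
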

We recall here that $\alpha(z)$ is the average acceptance probability from state $z$. The geometric distribution result follows from Proposition \ref{prop:stochasticdominatingcoupling} in Appendix. The rest of the proposition follows from
\begin{align*}
\mathbb{P}[\tau=1]&=\int\alpha(z)g(\mathrm{d}z)\\
&=\iint\left\{1\wedge\exp(z'-z)\right\}g(\mathrm{d}z)g(\mathrm{d}z').
\end{align*}
It entails trivially that $\mathbb{P}[\tau=1]\geq\frac{1}{2}$. Noting that $\lim_{N\rightarrow\infty}Z=0$
a.s. under $g$ (which is dependent on $N$) under Assumption \ref{assumption:pf}, see e.g. \cite{Del_Moral_2004}, we obtain $\lim_{N\rightarrow\infty}\mathbb{P}[\tau=1]=1$ by dominated convergence, thus $\lim_{N\rightarrow\infty}\text{BC}_{k:m}(h)=0$.

\subsection{Unbiased filtering}

A PF generates estimates $p_{N}(y_{1:t})$ of $p(y_{1:t})$ at all times $t=1,\ldots,T$. These estimates can be used to perform one step
of coupled PIMH for $T$ pairs of Markov chains, each pair corresponding to
one of the smoothing distributions $p(x_{1:t}|y_{1:t})$. This requires some additional bookkeeping
to keep track of the meeting times and unbiased estimators associated with each $p(x_{1:t}|y_{1:t})$ but can be used to
unbiasedly estimate expectations with respect to all filtering distributions $p(x_t|y_{1:t})$. This was not directly feasible with coupled i-CPF in \cite{jacob2017smoothing}.
In particular, this allows us to estimate unbiasedly the predictive likelihood terms $p(y_t|y_{1:t-1})$
which can be used for a goodness-of-fit test \cite{gerlach1999diagnostics}.

\section{Analysis \label{sec:guidelines}}

\subsection{Meeting time: large sample approximation\label{subsec:CLT}}

We investigate here the distribution of the meeting time in the large
sample regime, i.e. in the interesting scenarios where $T$ is large. Under strong mixing assumptions, \cite{berard2014lognormal}
showed that letting $\frac{T}{N}=\gamma$ for some $\gamma>0$ then the following Central Limit Theorem (CLT) holds:
$Z\stackrel{d}{\rightarrow}\mathcal{N}\left(-\frac{1}{2}\sigma^{2},\sigma^{2}\right)$
as $T\rightarrow\infty$ where $\sigma^{2}=\gamma\bar{\sigma}^{2}$
for some $\bar{\sigma}^{2}>0$. Empirically, this CLT appears to hold for many realistic models not satisfying these strong mixing assumptions \cite{pitt2012some,doucet2015biometrika}.
Under this CLT, using similar regularity conditions as in \cite{schmon2018large},
it can be shown that the Markov kernel $Q$ defined in (\ref{eq:PforZ})
converges in some suitable sense towards the Markov kernel
$Q_{\sigma}$ given by
\begin{equation*}
\left\{1\wedge\exp(z'-z)\right\}g_{\sigma}(z')\mathrm{d}z'+\left\{ 1-\alpha_{\sigma}(z)\right\} \delta_{z}(\mathrm{d}z'),\label{eq:approxP}
\end{equation*}
where $g_{\sigma}(z)=\mathcal{N}\left(z;-\frac{1}{2}\gamma\bar{\sigma}^{2},\gamma\bar{\sigma}^{2}\right)$
and $\alpha_{\sigma}(z):=\int\left\{1\wedge\exp(z'-z)\right\}g_{\sigma}(z')\mathrm{d}z'$
denotes the average acceptance probability in $z$. For this limiting kernel\footnote{Note that $Q_{\sigma}$ is not uniformly ergodic whereas $Q$ is under Assumption \ref{assumption:pf}.},
the following result holds.
\begin{prop}
\label{prop:approxchain} \cite[Corollary 3]{doucet2015biometrika} The
invariant density of $Q_{\sigma}$ is $\pi_{\sigma}(z)=\mathcal{N}\left(z;\frac{1}{2}\sigma^{2},\sigma^{2}\right)$
and its average acceptance probability is given by
\[
\alpha_{\sigma}(z):=1-\Phi\left(\frac{z+\frac{\sigma^{2}}{2}}{\sigma}\right)+e^{-z}\Phi\left(\frac{z-\frac{\sigma^{2}}{2}}{\sigma}\right),
\]
where $\Phi$ denotes the cumulative distribution function of the
standard Normal distribution.
\end{prop}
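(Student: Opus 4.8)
The plan is to exploit the fact that $Q_\sigma$ is an Independent Metropolis--Hastings kernel whose proposal $g_\sigma$ and target $\pi_\sigma$ are both Gaussian with common variance $\sigma^2$, so that the entire argument reduces to elementary Gaussian algebra combined with the reversibility structure of IMH. The first step is to record the key algebraic identity $\pi_\sigma(z)=g_\sigma(z)\exp(z)$, mirroring the relation $\pi(z)=g(z)\exp(z)$ introduced for the exact chain. Since $g_\sigma=\mathcal{N}(-\tfrac{1}{2}\sigma^2,\sigma^2)$ and $\pi_\sigma=\mathcal{N}(\tfrac{1}{2}\sigma^2,\sigma^2)$ differ only by a shift of the mean from $-\tfrac{1}{2}\sigma^2$ to $+\tfrac{1}{2}\sigma^2$, completing the square in the exponent of $g_\sigma(z)e^z$ reproduces exactly the exponent of $\pi_\sigma(z)$: the linear term $z$ is absorbed into the shift of the mean by $\sigma^2$, and the two constant terms cancel so that the normalizing constant is \emph{exactly} one. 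The one point worth checking carefully is precisely this cancellation of constants, as it is the specific relation (mean $=\pm\tfrac{1}{2}\sigma^2$, rather than an arbitrary Gaussian) that makes $\pi_\sigma=g_\sigma e^z$ hold with no leftover factor, and everything downstream depends on it.

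Next I would establish invariance through detailed balance. Writing the off-diagonal part of $Q_\sigma$ as $g_\sigma(z')\{1\wedge e^{z'-z}\}$ and using the identity above, one obtains $\pi_\sigma(z)g_\sigma(z')\{1\wedge e^{z'-z}\}=g_\sigma(z)g_\sigma(z')\,(e^z\wedge e^{z'})$, which is manifestly symmetric in $(z,z')$. Hence $\pi_\sigma(z)Q_\sigma(z,\mathrm{d}z')$ is reversible on the off-diagonal, while the rejection mass $\{1-\alpha_\sigma(z)\}\delta_z(\mathrm{d}z')$ weighted by $\pi_\sigma(z)$ is concentrated on the diagonal and is therefore trivially symmetric under swapping $z$ and $z'$. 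Integrating the resulting reversibility identity over $z$ then yields stationarity of $\pi_\sigma$ for $Q_\sigma$.

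Finally I would compute $\alpha_\sigma(z)=\int\{1\wedge e^{z'-z}\}g_\sigma(z')\,\mathrm{d}z'$ by splitting the domain at $z'=z$. On $\{z'\ge z\}$ the minimum equals one, so the integral is the upper tail $\int_z^\infty g_\sigma=1-\Phi((z+\tfrac{\sigma^2}{2})/\sigma)$. On $\{z'<z\}$ the minimum equals $e^{z'-z}$, and pulling out the factor $e^{-z}$ leaves $\int_{-\infty}^z e^{z'}g_\sigma(z')\,\mathrm{d}z'=\int_{-\infty}^z\pi_\sigma(z')\,\mathrm{d}z'=\Phi((z-\tfrac{\sigma^2}{2})/\sigma)$, using the identity from the first step; multiplying back by $e^{-z}$ produces the second term. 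Summing the two pieces gives the stated formula. I do not expect any genuine obstacle here, as the result is essentially a one-line Gaussian computation once the identity $\pi_\sigma=g_\sigma e^z$ and the IMH reversibility are in hand; the only place demanding care is the bookkeeping of the means and signs inside the Gaussian CDFs, so that the arguments emerge as $(z\pm\tfrac{\sigma^2}{2})/\sigma$ with the correct sign.
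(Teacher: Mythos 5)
Your proof is correct, and it is worth noting that the paper itself offers no proof of this proposition at all: the result is simply imported by citation from \cite[Corollary 3]{doucet2015biometrika}, so your self-contained derivation is genuinely additional content rather than a restatement of an argument in the text. Your three steps all check out. The key identity $\pi_\sigma(z)=g_\sigma(z)e^{z}$ holds exactly, since
\[
-\frac{(z+\sigma^{2}/2)^{2}}{2\sigma^{2}}+z=-\frac{z^{2}-z\sigma^{2}+\sigma^{4}/4}{2\sigma^{2}}=-\frac{(z-\sigma^{2}/2)^{2}}{2\sigma^{2}},
\]
with no leftover constant, which is precisely the point you flagged as needing care; this mirrors the relation $\pi(z)=g(z)\exp(z)$ that the paper states for the exact (pre-limit) chain, so your argument is consistent with the structure the authors rely on. The detailed-balance computation $\pi_\sigma(z)g_\sigma(z')\{1\wedge e^{z'-z}\}=g_\sigma(z)g_\sigma(z')(e^{z}\wedge e^{z'})$ is symmetric as claimed, and the diagonal rejection mass is trivially reversible, giving invariance. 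The split of $\alpha_\sigma(z)$ at $z'=z$ yields exactly $1-\Phi\bigl((z+\sigma^{2}/2)/\sigma\bigr)$ from the upper tail of $g_\sigma$ and $e^{-z}\Phi\bigl((z-\sigma^{2}/2)/\sigma\bigr)$ from the lower piece via the same identity, matching the stated formula. One small refinement if you want to justify the definite article in ``\emph{the} invariant density'': since $g_\sigma>0$ everywhere, $Q_\sigma$ is $\pi_\sigma$-irreducible, so the invariant distribution is unique; this is a one-line remark but closes the gap between ``$\pi_\sigma$ is invariant'' and ``$\pi_\sigma$ is the invariant density.''
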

Under this large sample approximation, the probability $\mathbb{P}[\tau=n]$ can be written as
\begin{equation}
\mathbb{E}\left[\mathbb{P}[\tau=n|Z^{(0)}]\right]=\int\alpha_{\sigma}(z)(1-\alpha_{\sigma}(z))^{n-1}g_{\sigma}(z)\mathrm{d}z,\label{eq:limptau}
\end{equation}
thus $\mathbb{P}[\tau=1]=\frac{1}{2}\left\{1+\exp(\sigma^{2})\text{Erfc}(\sigma)\right\}$, Erfc denoting the complementary error function.
Similarly we see that the expected meeting time is given by
$\mathbb{E}[\tau]=\mathbb{E}_{g_{\sigma}}\left[\alpha_{\sigma}(Z)^{-1}\right]$.
This allows us to approximate numerically expectations, quantiles
and probabilities of $\tau$ as a function of $\sigma$, the standard
deviation of $\log p_{N}(y_{1:T})$ under the law of the particle filter.
Figure \ref{fig:psummary} displays $\mathbb{P}[\tau=1]$ and $\mathbb{E}[\tau]$ as a function of $\sigma$. We see that $\mathbb{P}[\tau=1]=0.71$ for $\sigma=1$ rising to $\mathbb{P}[\tau=1]=0.95$ for $\sigma=0.1$. The expected meeting time $\mathbb{E}[\tau]$ is also the expected number of iterations to return an unbiased estimator for $m=0$ and grows relatively benignly with $\sigma$.

\begin{figure}[ht]
\begin{centering}
\includegraphics[scale=0.4]{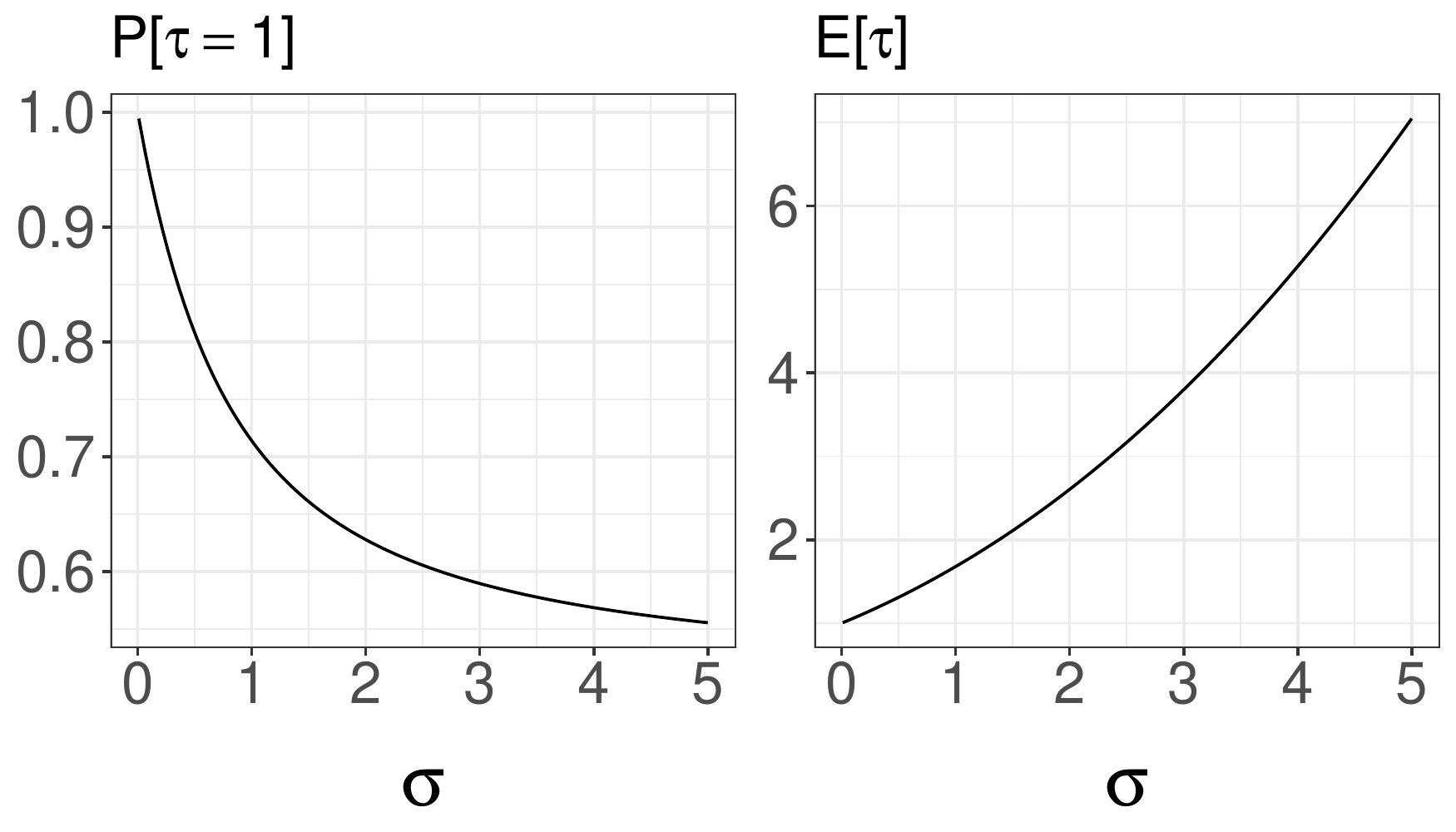}

\caption{$\mathbb{P}[\tau=1]$ (left) and $\mathbb{E}[\tau]$ (right) as a function of the standard deviation $\sigma$  of $\log {p}_N(y_{1:T})$. \label{fig:psummary}
}
\end{centering}
\end{figure}

We compare the distribution of the meeting times with its large sample approximation (\ref{eq:limptau}) on a
stationary auto-regressive (AR) model $X_{t}\sim\mathcal{N}(aX_{t-1},1)$
and $Y_{t}\sim\mathcal{N}(X_{t},\sigma_{y}^{2})$ with $a=0.5$ and
$\sigma_{y}^{2}=10$ on a simulated dataset of $T=100$. Estimates
of $\mathbb{P}[\tau\ge n]$ were obtained empirically for a range
of values of $N$ and compared with the predicted values based on
the estimated variance $\sigma^{2}$ of $\log p_{N}(y_{1:T})$
using $10^{5}$ runs of coupled PIMH.
Varying $N$ between 10 and 110, $\sigma^{2}$ was between 0.2 and 3.0 in this range.
The tail probabilities of the meeting time over this range are shown
in Figure \ref{fig:lgssmmt}.
Confidence intervals for the estimates of $\mathbb{P}[\tau\ge n]$
are shown in Figure \ref{fig:lgssmmt} with error bars indicating
$\pm2$ standard deviations. We see that there is a satisfactory agreement, with
predicted tail probabilities closely matching confidence intervals
for each value, and larger values of $N$ leading as expected to shorter
meeting times on average.%

\begin{figure}[ht]
\begin{centering}

\begin{centering}
\includegraphics[scale=0.4]{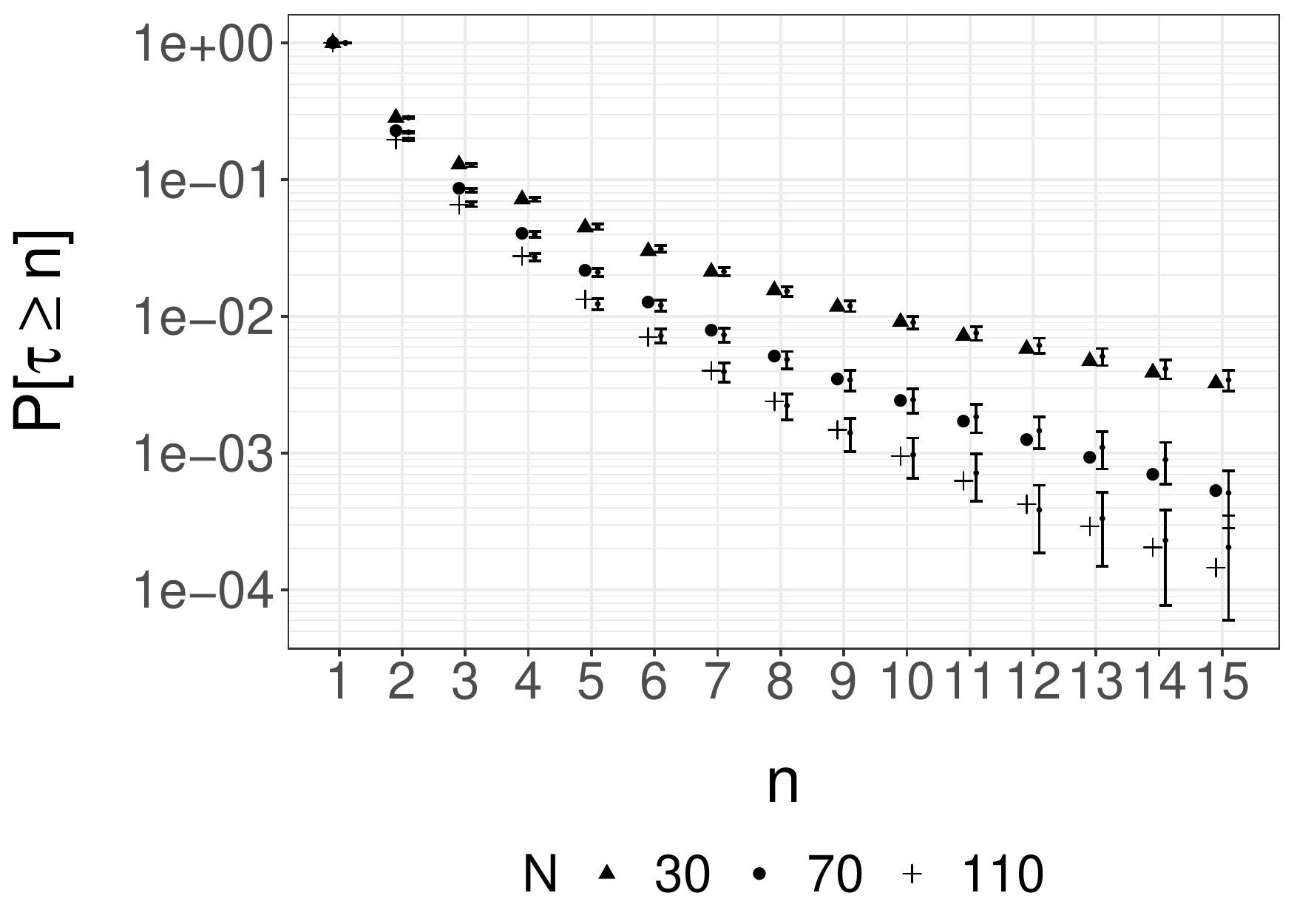}
\par\end{centering}
\par\end{centering}
\caption{Empirical estimates of $\mathbb{P}[\tau\ge n]$ (with $\pm2$ standard errors) and comparison with the large sample approximation for toy example.\label{fig:lgssmmt}}
\end{figure}

\subsection{On the selection of $N$ for large $m$}\label{guide}
We provide here a heuristic for selecting $N$ to minimize the variance of $H_{k:m}$ at fixed
computational budget when both $m$ and $T$ are large. We will minimize the computational inefficiency defined by
\[
\text{C}[H]:=\mathbb{V}[H_{k:m}]\times N
\]
as the running time is proportional to $N$. For $m$ sufficiently large, we expect that $\tau <m$ with
very high probability and so the time to obtain a single unbiased
estimator is $m$. We note that increasing
$N$ typically leads to a decreasing asymptotic variance of the ergodic averages
associated with the PIMH chain and from Proposition \ref{prop:mtdist}
a reduction in the bias correction, however at the cost of more computation.
For large $m$, we also expect that the dominant term of $\mathbb{V}[H_{k:m}]$
will arise from $\text{MCMC}_{k:m}(h)$ and will be essentially the asymptotic variance of $h$ given
by $\mathbb{V}_{\pi}[h]  \text{IF}(h)$ divided by $(m-k+1)$, where $\text{IF}(h)$ is the Integrated Autocorrelation Time (IACT) of $h$ for the PIMH kernel.
For $N$ sufficiently large, we expect that $X_{1:T}$ and $p_{N}$ are approximately
independent under the PF proposal. By a reasoning similar to the proof of \cite[Lemma 4]{pitt2012some}, $\text{IF}(h)$ will then be approximately proportional to $\text{IF}(\sigma)$ defined in Eq.\ (11) in \cite{pitt2012some}. This is illustrated in Appendix \ref{sec:inefficiency} where we plot $\text{IF}(h)$ for a variety of test functions for a range of $N$  against $\text{IF}(\sigma)$ over the corresponding range of $\sigma$ and show that they are indeed approximately proportional.
Minimizing $\text{IF}[H]$ w.r.t.\ $N$ is then approximately equivalent to minimizing $\text{IF}(\sigma)/\sigma^{2}$ as $\sigma^{2}$ is typically
inversely proportional to $N$. This minimization has already been carried out in \cite{pitt2012some} where it was found
that the minimizing argument is $\sigma=0.92$. Practically, this means that one should select $N$ to ensure that
the standard deviation of $\log p_{N}(y_{1:T})$ is equal approximately to this value. The resulting value of $N$ is expected to be close to the value of $N$ minimizing $\text{C}[H]$, which is approximately proportional to $\text{IF}(h)/\sigma^{2}$. This is verified in Figure \ref{fig:toyineffopt} on the AR example of Section \ref{subsec:CLT}. Note that these guidelines do not apply to $\bar{H}_{k:m}$ for $h$ a function of states close to $T$ as it is not true that
$\pi_{N}(h)$ and $p_{N}$ are approximately independent.

\begin{figure}[ht]
\begin{centering}
\includegraphics[scale=0.4]{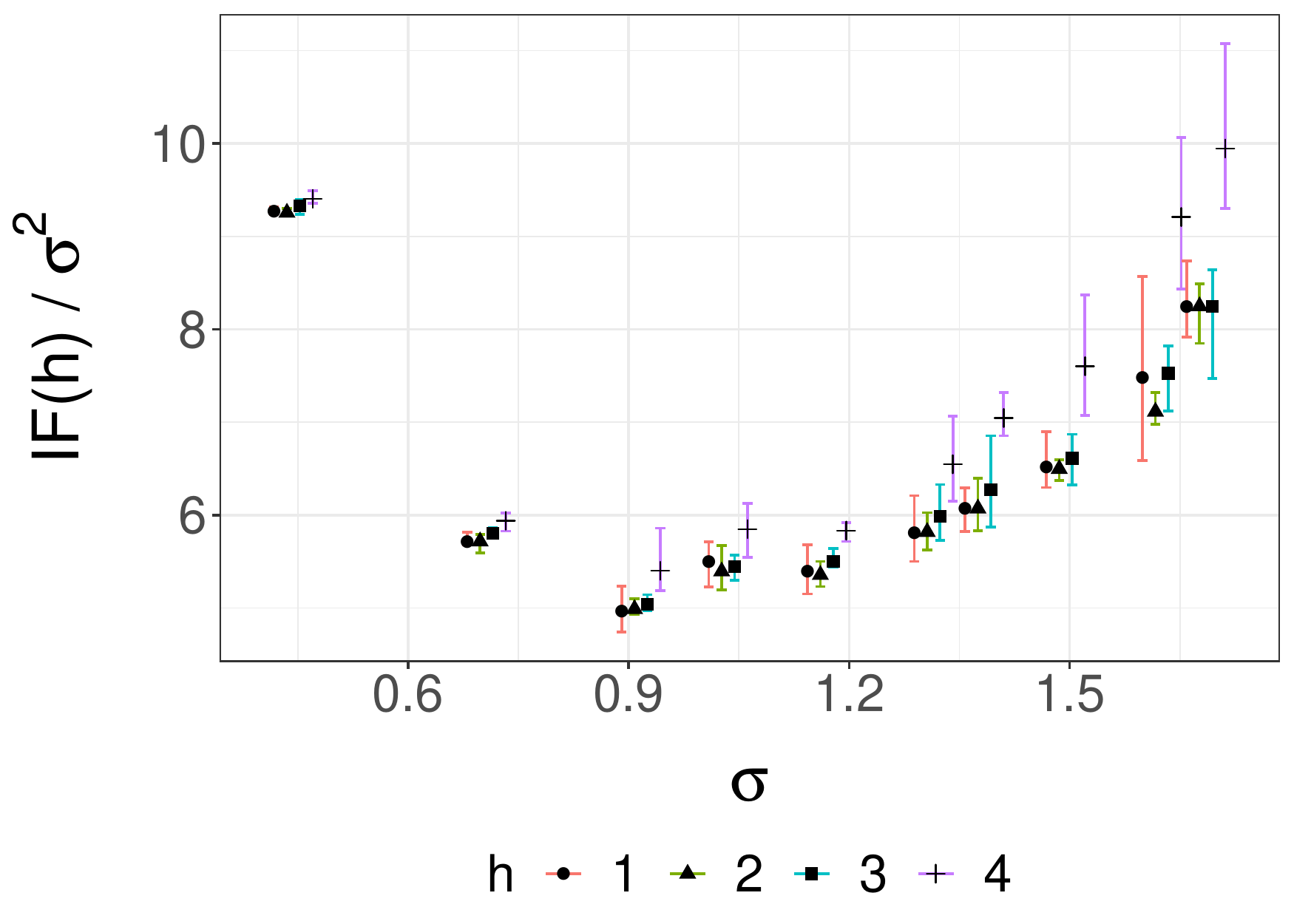}
\caption{ $\text{IF}(h_i)/\sigma^{2}$ for $h_1=x_1$,  $h_2=x_T$,  $h_3=\sum_t x_t$ and  $h_4=\sum_t x_t^2$ as a function of $\sigma^{2}$.\label{fig:toyineffopt}
}
\end{centering}
\end{figure}

\section{Numerical experiments\label{sec:Numerical-experiments}}

We apply the methodology to two models where the transition density
of the latent process is analytically intractable but simulation from it is possible.
However, this involves sampling a random number of random variables.
The unbiased smoother proposed in \cite{jacob2017smoothing} relies on common random numbers
and it is unclear how one could implement it in this context. The coupled conditional backward sampling PF scheme proposed in \cite{lee2018coupled}
does not apply as we cannot evaluate the transition density pointwise.
In both scenarios, we use the bootstrap PF with multinomial resampling, that is $q_{1}(x_{1})=\mu(x_{1})$ and $q_{t}(x_{t}|x_{t-1})=f(x_{t}|x_{t-1})$, and Assumption \ref{assumption:pf}
is satisfied.

\subsection{Stochastic kinetic model}
We consider a stochastic kinetic model represented by a jump Markov process introduced in \cite{golightly2005bayesian}.
Such models describe a system of chemical reactions in continuous
time, with a reaction occurring under the collision of two species
at random times. The discrete number of each species describes the
state with jumps representing the change in a particular species.
The discrete valued state $(X_{t,q})_{t\ge0,1\le q\le Q}$ comprises
a $Q$-vector of species at each time, where one of $R$ reactions
may occur at any random time, given by hazard functions $f_{r}$ for
$r\in\{1,...,R\}$. The effect of such a reaction is described by
a stoichiometry matrix $S$, where the instantaneous change in the
number of species $q$ for a certain reaction $r$ out of a possible
$R$ different reactions is encoded in element $S_{q,r}$. For the prokaryotic autoregulation model and parameterisation considered in \cite{golightly2017efficient},
the state is a four dimensional vector evolving according to $8$
possible reactions, for which the stoichiometry matrix is given by
\begin{align*}
S  &=\left(\begin{array}{cccccccc}
0 & 0 & 1 & 0 & 0 & 0 & -1 & 0\\
0 & 0 & 0 & 1 & -2 & 2 & 0 & -1\\
-1 & 1 & 0 & 0 & 1 & -1 & 0 & 0\\
-1 & 1 & 0 & 0 & 0 & 0 & 0 & 0
\end{array}\right),\\
f(X,c) &=(c_{1}X_{4}X_{3},c_{2}(k-X_{4}),c_{3}X_{4},c_{4}X_{1},\\&c_{5}X_{2}(X_{2}-1)/2,c_{6}X_{3},c_{7}X_{1},c_{8}X_{2})'.
\end{align*}
Coefficients $c_{1:8}$ and $k$ are parameters of the model, given
by $c=(0.1,0.7,0.35,0.2,0.1,0.9,0.3,0.1)$ and $k=10$. We collect $T=100$ noisy observations of the latent process at regular intervals of length $\Delta=0.1$, i.e.
\[
Y_{t}=\left(\begin{array}{cccc}
1 & 0 & 0 & 0\\
0 & 1 & 2 & 0
\end{array}\right)X_{\Delta t}+\epsilon_{t},\quad \epsilon_{t}\stackrel{i.i.d}{\sim}\mathcal{N}(0,I_{2}).
\]
We fix the initial condition $X_{0}=(8,8,8,5)$.

To simulate synthetic data and to run the bootstrap PF, we sample the latent process $X_t$ using Gillespie's direct method \cite{gillespie1977exact},
whereby the time to the next event is exponential with rate $\sum_{r=1}^{R}f_{r}(X,c)$
and reaction $r$ occurs with probability $f_{r}(X,c)/\sum_{r=1}^{R}f_{r}(X,c)$.
The estimated survival probabilities
of the meeting time $\tau$, with $\pm2$ standard errors, computed using 500 independent runs of coupled PIMH are plotted in Figure \ref{fig:stochkin},
along with the probabilities obtained from the large sample approximation
in Section \ref{subsec:CLT},
showing good agreement between the two.
In Figure \ref{fig:sksderr} we display the unbiased smoothing estimators obtained by averaging the unbiased estimators obtained over 500 independent runs for $N=1,000,k=m=0$ and the corresponding confidence intervals. Alternative choices of $k,m$ can lead to improved performance at fixed computational budget \cite{jacob2017unbiased,middleton2018unbiased}.

\begin{figure}[ht]
\begin{centering}

\begin{centering}
\includegraphics[scale=0.4]{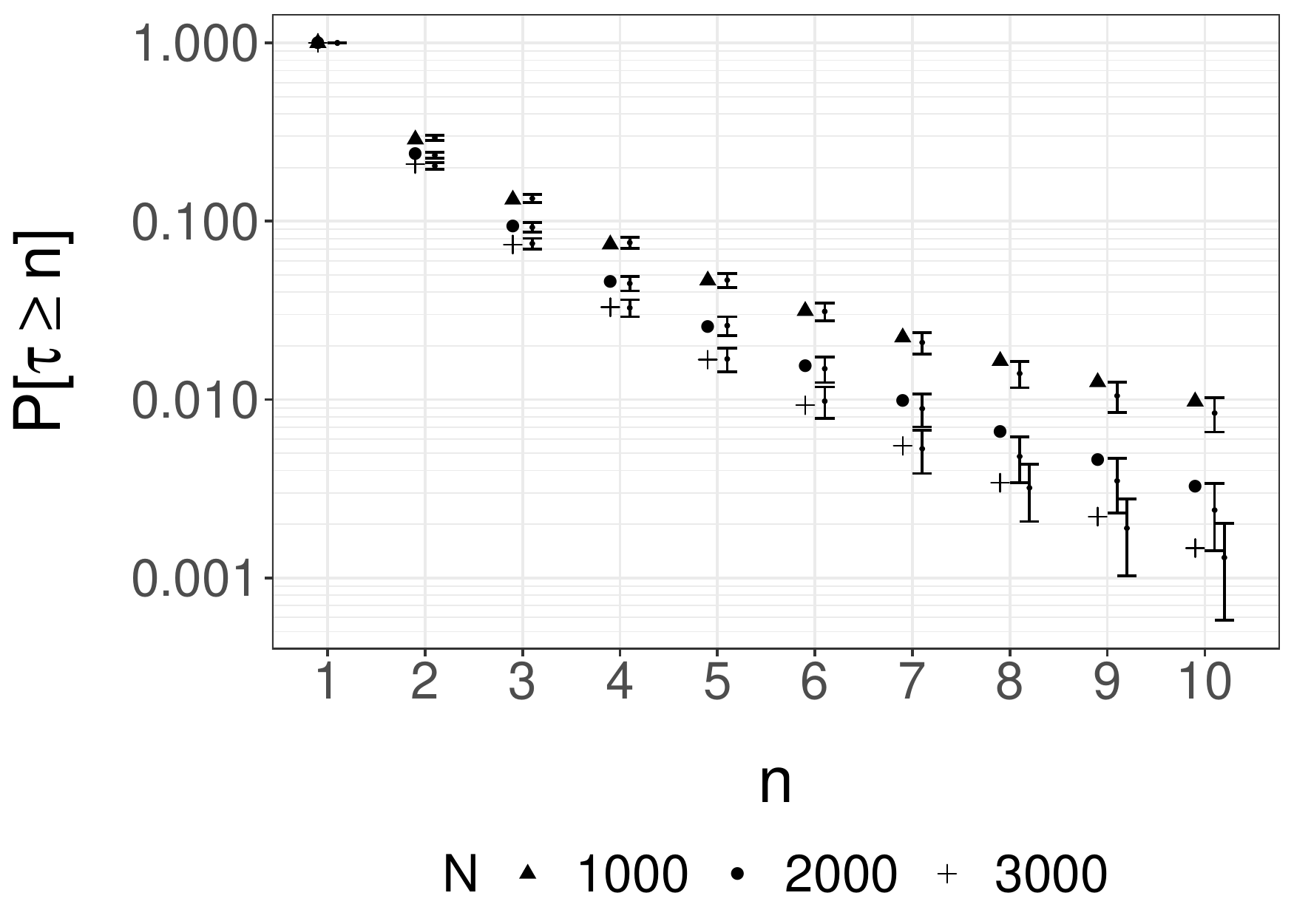}
\par\end{centering}
\label{fig:predictedtaugsss-1}
\par\end{centering}
\caption{Empirical estimates of $\mathbb{P}[\tau\ge n]$ (with $\pm2$ standard errors) and comparison with those implied by the large sample approximation for stochastic kinetic model.}
\label{fig:stochkin}
\end{figure}

\begin{figure}[ht]
\begin{centering}

\begin{centering}
\includegraphics[scale=0.4]{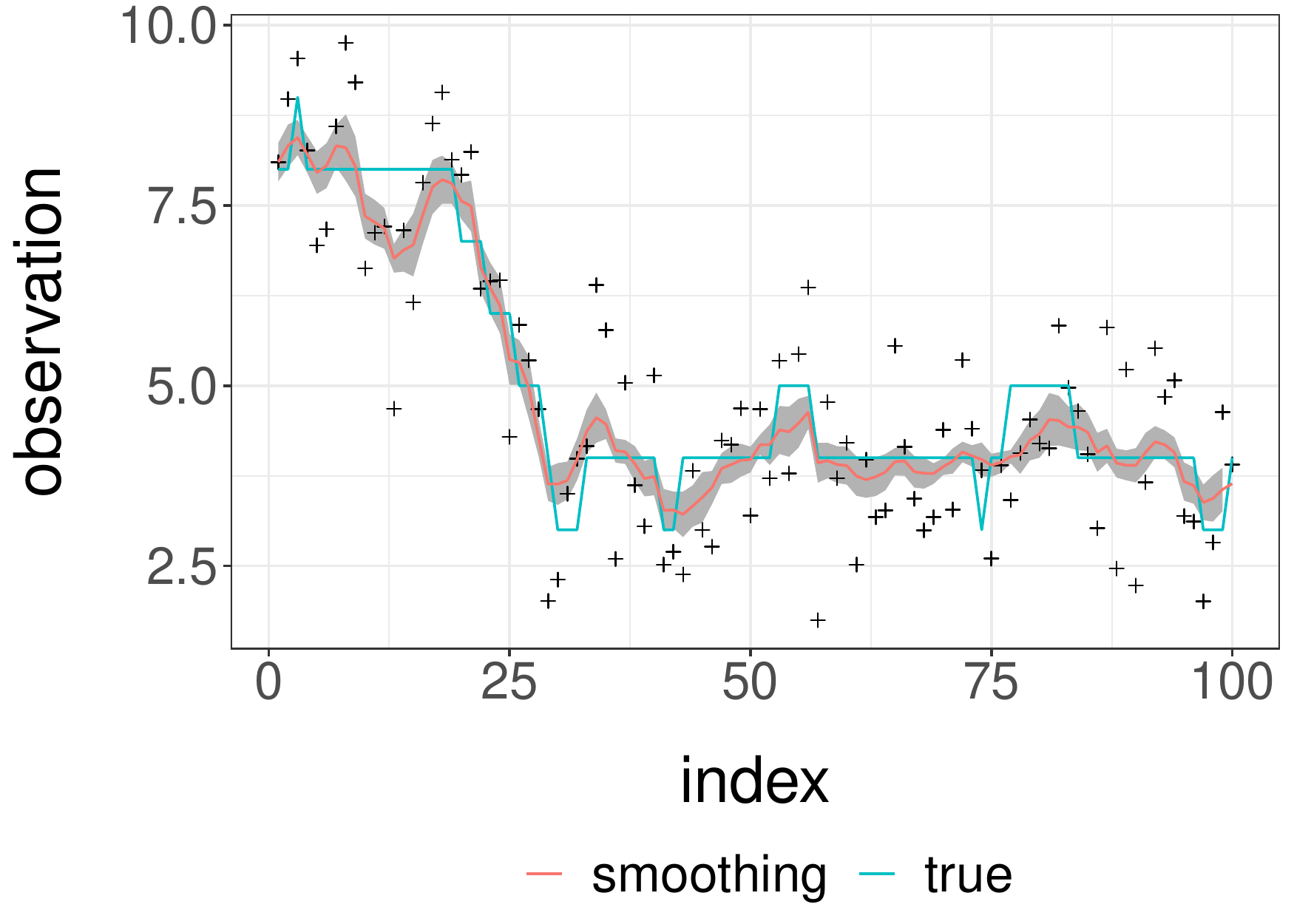}
\par\end{centering}
\label{fig:predictedtaugsss-1}
\par\end{centering}
\caption{Unbiased estimates of  $\mathbb{E}(X_{1,t}\vert y_{1:T})$ (red) (with $\pm3$ pointwise standard errors) and $X_{1,t}$ (blue) for Markov jump process.}
\label{fig:sksderr}
\end{figure}

\subsection{L\'evy-driven stochastic volatility}

Introduced in \cite{barndorff2001non}, L\'evy-driven stochastic volatility
models provide a flexible model for the log-returns of a financial
asset. Letting $(Y_{t})$ denote the log-return process, we have
\[
Y_{t}=\mu+\beta V_{t}+V_{t}^{1/2}\epsilon_{t},\qquad\epsilon_{t}\stackrel{i.i.d.}{\sim}\mathcal{N}(0,1),
\]
where $V_{t}$ (termed the `actual volatility') is treated as a stationary
stochastic process. The latent state comprises the pair
of actual and spot volatility $X_{t}=(V_{t},W_{t})$. In the terminology of \cite{barndorff2002econometric},
the integrated volatility is the integral of the spot volatility and
the actual volatility is an increment of the integrated volatility
over some unit time. Initializing $Z_{0}\sim\Gamma(\xi^{2}/\omega^{2},\xi^{2}/\omega^{2})$, the process evolves through sampling the following
random variables and recursing the state $X_{t}=(V_{t},W_{t})$ according
to
\begin{align*}
K &\sim\text{Poisson}(\lambda\xi^{2}/\omega^{2}),\quad C_{1:K}  \stackrel{i.i.d.}{\sim}\mathcal{U}[t-1,t],  \\
E_{1:K}&\stackrel{i.i.d.}{\sim}\text{Exp}(\xi/\omega^{2}),\quad
W_{t}=e^{-\lambda}W_{t-1}+\sum_{j=1}^{K}e^{-\lambda(t-C_{j})}E_{j}, \\
V_{t}&=\frac{1}{\lambda}(W_{t-1}-W_{t}+\sum_{j=1}^{K}E_{j}).
\end{align*}
In particular, conditionally on $X_{t-1}$, simulation of
$X_{t}$ requires a random number of random numbers sampled
at each iteration. The parameters $(\xi,\omega^{2})$ denote the stationary
mean and variance of the spot volatility respectively, $\lambda$
describes the exponential decay of autocorrelations, $\beta$ denotes
the risk premium for excess volatility and $\mu$ the drift of the
log-return. In the following we perform unbiased smoothing of $X_{t}$
using $T=500$ data from the S\&P 500 index used in \cite{chopin2013smc2}. A summary
of the data and parameter inference is included in Appendix \ref{subsec:datasummary}.

\begin{figure}
%
\subfloat[Empirical ($\pm2$ standard errors) and predicted tail probabilities implied by large-sample approximation ${\mathbb{P}[\tau\ge n]}$.]
{\begin{centering}
\includegraphics[scale=0.4]{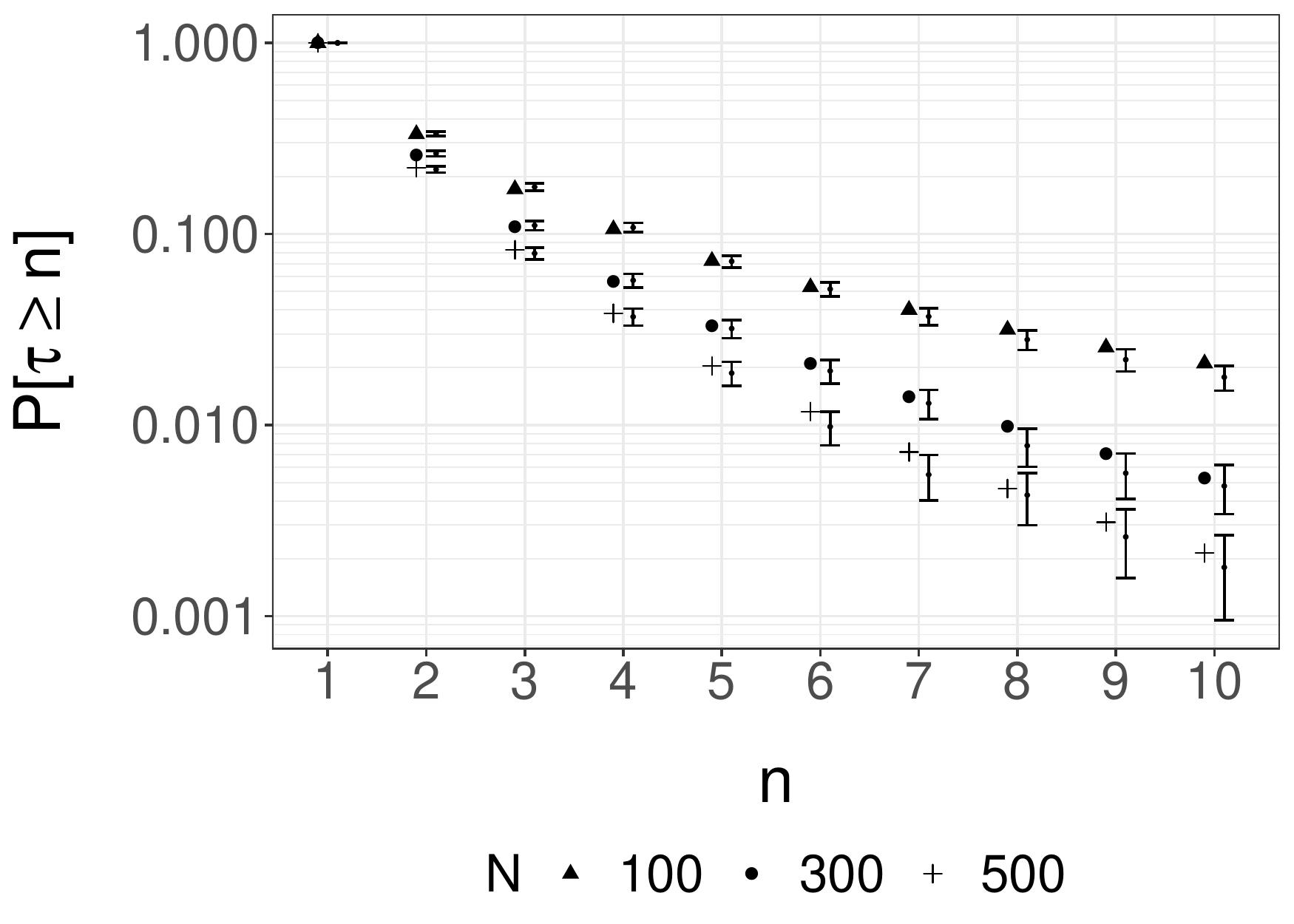}
\par\end{centering}
\centering{}\label{fig:mtsv}}

\begin{centering}
\subfloat[{Estimates of $(m-k+1)\mathbb{V}[H_{k:m}]/(\mathbb{V}_\pi[h] \sigma^{2})$ ($\pm3$ standard errors) computed using 10,000 runs.}]{\begin{centering}
\includegraphics[scale=0.4]{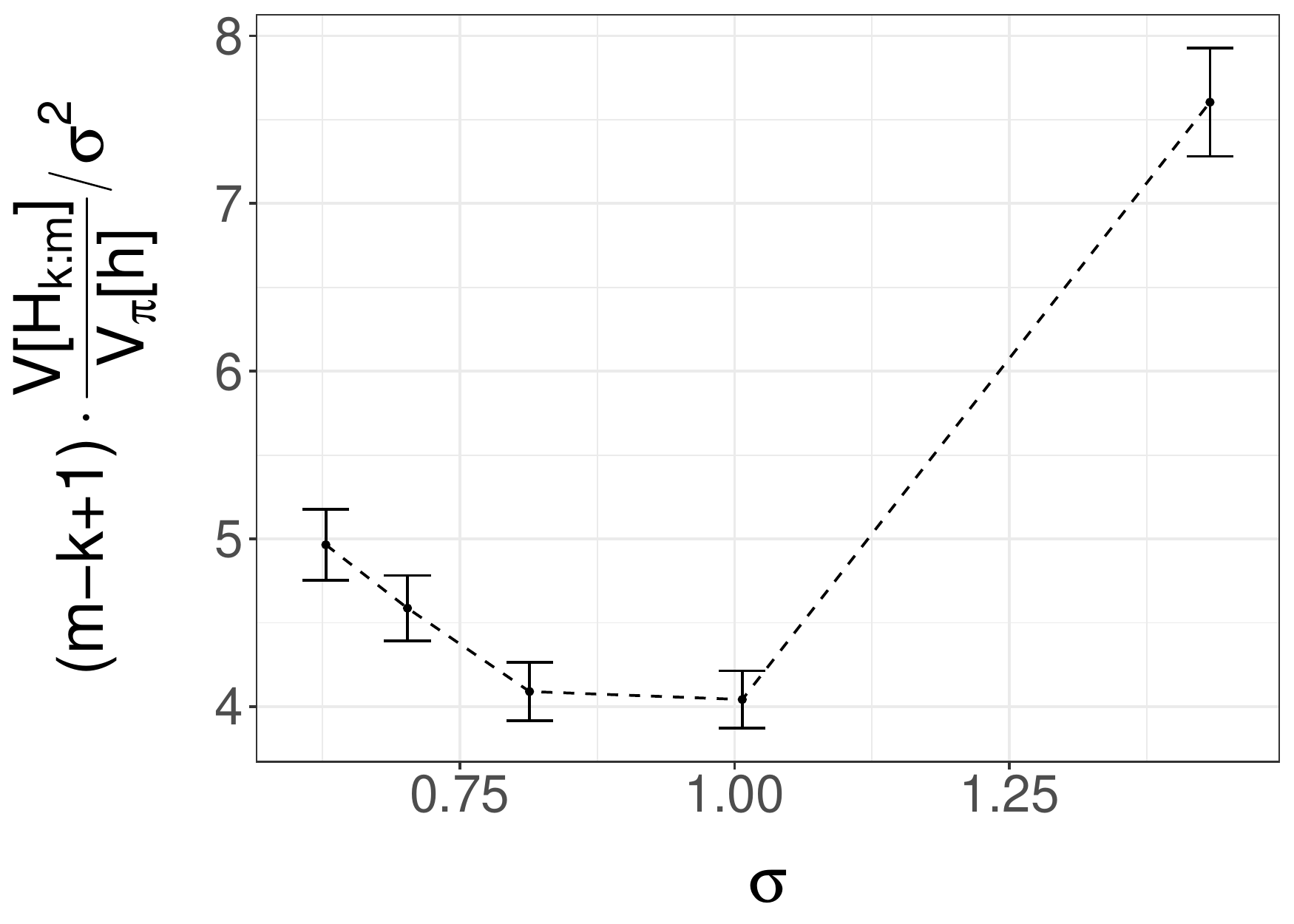}
\par\end{centering}
\centering{}\label{fig:svineff}}

\subfloat[Unbiased estimates of $\mathbb{E}(W_{t}\vert y_{1:T})$ of the S\&P 500
over 2005-2007 and confidence intervals at $\pm3$ standard errors. ]{\begin{centering}
\includegraphics[scale=0.4]{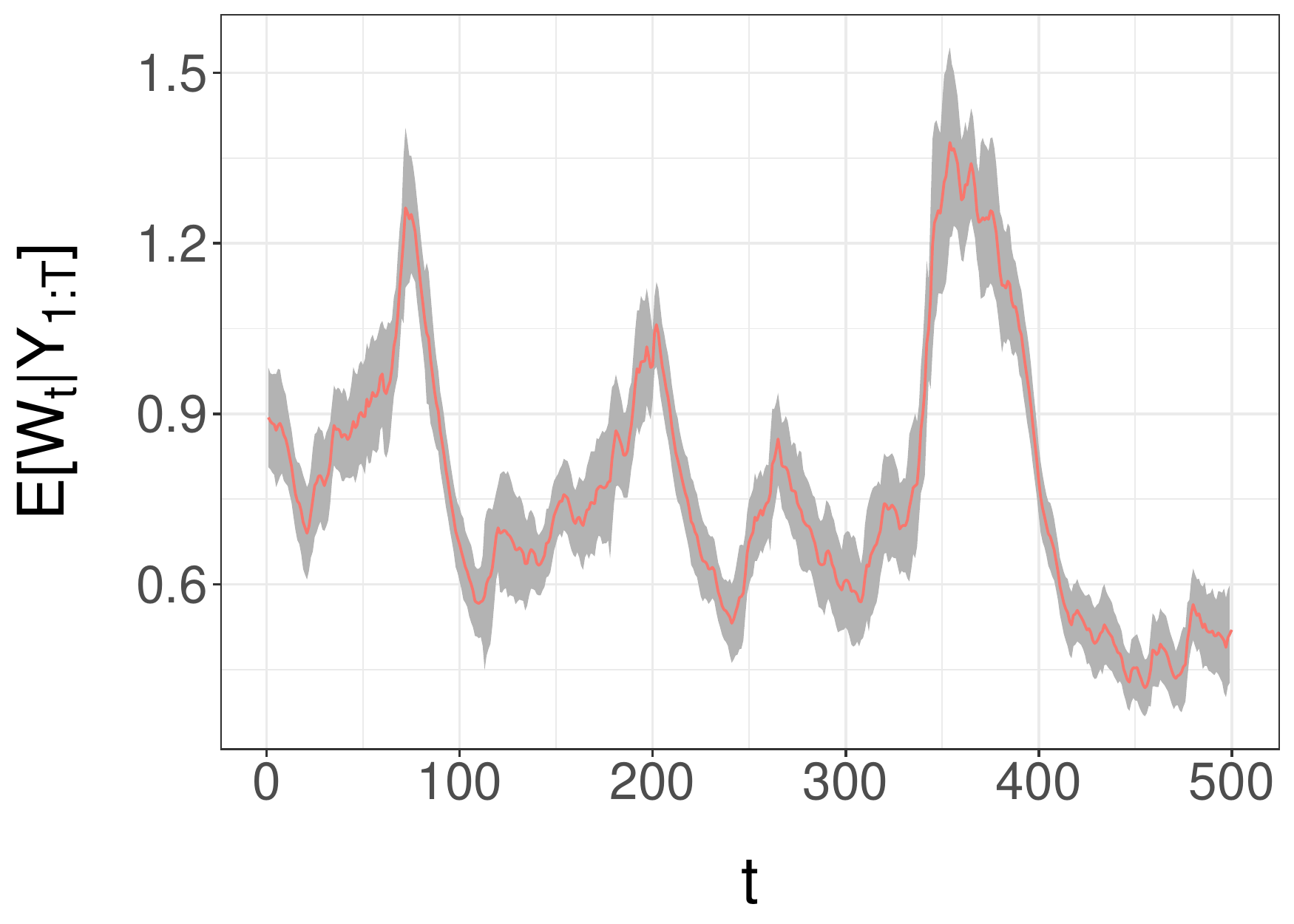}
\par\end{centering}
\centering{}\label{fig:mtsv-1}}
\par\end{centering}
\centering{}

\caption{Distribution of meeting times (top), computational inefficiency (bottom) and smoothing state estimators for L\'evy-driven stochastic volatility applied to real data (bottom).}

\label{fig:svmt}
\end{figure}

The empirical distributions of the meeting time for for $N\in\{100,...,500\}$ obtained using $1,000$ runs
are shown in Figure \ref{fig:svmt}, again showing good agreement with the large-sample approximation.
We then obtain unbiased estimators $H_{k:m}$ for $h(x_{1:T})=\sum_{t=1}^T v_{t}$ using $k=20,m=512$ over this grid of $N$.
Figure \ref{fig:svineff} plots $(m-k+1)\mathbb{V}[H_{k:m}]/(\mathbb{V}_\pi[h] \sigma^{2}$). We expect this function to be close to $\text{IF}(h)/\sigma^{2}$ as $m\rightarrow\infty$ and $\text{IF}(h)/\sigma^{2}$ to be minimized around 0.92. The experiments are consistent with this result. Figure \ref{fig:mtsv-1} presents the unbiased estimates of the spot volatility
$W_{t}$ obtained by averaging the unbiased estimates obtained over 1,000 runs and the corresponding confidence intervals for $k=m=0$ and $N=100$. Different choices of $k,m$ could
lead to improved performance at fixed computational budget.

\section{Discussion}

We have introduced a simple approach to perform unbiased smoothing
in state-space models and we have provided guidance on the choice of tuning parameters through
appealing to a large sample approximation.

We have established the validity of the estimators when the incremental weights
are bounded (Assumption \ref{assumption:pf}) which ensures uniform ergodicity of the PIMH.
Rejection sampling is possible under a similar assumption and would provide
exact samples from the smoothing distribution. However the expected number of trials before acceptance of such
a rejection scheme increases typically exponentially fast with $T$.
If we are only interested in obtaining unbiased smoothing estimators and if the CLT discussed
in Section \ref{subsec:CLT} holds, we expect our coupling scheme to only require increasing $N$ linearly with $T$ to control $\sigma$ and thus the corresponding expectation of the meeting time.

Finally, the scheme proposed here can be extended to obtain unbiased estimators of expectations
with respect to any posterior distribution by replacing the particle filter proposal within the IMH by Annealed Importance Sampling \cite{neal2001annealed}
or a sequential Monte Carlo sampler \cite{delmoral2006sequential}. This is illustrated in Appendix \ref{sec:smcsamplers}.

\newpage
\bibliographystyle{plain}
\bibliography{csmcrefresh}

\cleardoublepage
\appendix

\section{Appendix}

\subsection{Proof of Proposition \ref{prop:mtdist}\label{proof:prop:mtdist}}

We establish a monotonicity property of the proposed coupling for PIMH.
Monotonicity properties of IMH samplers were exploited in an exact simulation
context in \cite{corcoran2002perfect} without this explicit construction.
\begin{prop}
\label{prop:stochasticdominatingcoupling}Under the proposed coupling
scheme the sequence of likelihood estimates $(p_{N}^{(n+1)})_{n\ge0}$
stochastically dominates $(\tilde{p}_{N}^{(n)})_{n\ge0}$ in the sense
that for any $n\ge1$, $s\ge0$
\[
p_{N}^{(n)}\ge\tilde{p}_{N}^{(n-1)}\Rightarrow p_{N}^{(n+s)}\ge\tilde{p}_{N}^{(n+s-1)}\quad\text{a.s.}
\]
\end{prop}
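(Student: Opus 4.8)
The plan is to reduce the whole statement to a one-step monotonicity property and then induct on $s$. The structural feature that makes this work is that at each iteration of Algorithm~\ref{alg:couplepimh} both chains are updated from the \emph{same} proposal $p_N^*$ and the \emph{same} uniform draw $\mathfrak{u}$. First I would rewrite the acceptance rule so that this shared randomness is transparent. Because $\mathfrak{u}\in[0,1]$, the event $\{\mathfrak{u}\le\alpha(p,p_N^*)\}=\{\mathfrak{u}\le 1\wedge(p_N^*/p)\}$ is exactly the event $\{\mathfrak{u}\,p\le p_N^*\}$. Hence, for fixed $p_N^*$ and $\mathfrak{u}$, the post-update likelihood estimate of a chain currently at $p>0$ is the deterministic map
\[
\phi(p)=
\begin{cases}
p_N^* & \text{if } \mathfrak{u}\,p\le p_N^*,\\
p & \text{otherwise.}
\end{cases}
\]
At iteration $n$ the first chain sets $p_N^{(n)}=\phi(p_N^{(n-1)})$ and the second sets $\tilde p_N^{(n-1)}=\phi(\tilde p_N^{(n-2)})$, with one and the same $\phi$ determined by the proposal and uniform drawn at that iteration.

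The heart of the argument is that $\phi$ is non-decreasing. Taking $p_1\ge p_2>0$, I would check the acceptance patterns. If both accept, $\phi(p_1)=\phi(p_2)=p_N^*$; if both reject, $\phi(p_1)=p_1\ge p_2=\phi(p_2)$. The pattern where the smaller value $p_2$ rejects while the larger value $p_1$ accepts cannot occur, since $\mathfrak{u}\,p_1\le p_N^*$ together with $p_1\ge p_2$ forces $\mathfrak{u}\,p_2\le p_N^*$. The only delicate case is $p_1$ rejecting and $p_2$ accepting, which gives $\phi(p_1)=p_1$ and $\phi(p_2)=p_N^*$; here rejection of $p_1$ means $\mathfrak{u}\,p_1>p_N^*$, and using $\mathfrak{u}\le1$ we get $p_1\ge\mathfrak{u}\,p_1>p_N^*=\phi(p_2)$, so $\phi(p_1)\ge\phi(p_2)$ still holds. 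This case is where both the precise Metropolis--Hastings form and the bound $\mathfrak{u}\le1$ are essential, and I expect it to be the main (if modest) obstacle.

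Given monotonicity of $\phi$, the one-step preservation is immediate: assuming $p_N^{(n)}\ge\tilde p_N^{(n-1)}$, at iteration $n+1$ both chains apply the common map $\phi_{n+1}$ built from the shared proposal and uniform, so $p_N^{(n+1)}=\phi_{n+1}(p_N^{(n)})\ge\phi_{n+1}(\tilde p_N^{(n-1)})=\tilde p_N^{(n)}$. I would then conclude the stated result for general $s\ge0$ by induction on $s$: the base case $s=0$ is the hypothesis $p_N^{(n)}\ge\tilde p_N^{(n-1)}$ itself, and the inductive step is precisely the one-step preservation just established. Since every comparison is made for a fixed realization of the shared randomness at each iteration, the implication holds almost surely, as claimed. (As an aside, the same reparameterization makes the convention $\alpha(\tilde p_N^{(-1)},\cdot):=1$ natural, corresponding to $\tilde p_N^{(-1)}=0$, which initializes the ordering at $n=1$.)
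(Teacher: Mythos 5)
Your proof is correct and follows essentially the same route as the paper's: both exploit the fact that at each iteration the two chains share the same proposal $p_N^*$ and the same uniform $\mathfrak{u}$, establish one-step preservation of the ordering by a case analysis on which chain accepts (using $\mathfrak{u}\le 1$ in the case where the larger value rejects and the smaller accepts), and then propagate the ordering forward. Your packaging of the update as a common monotone random map $\phi$ with an explicit induction on $s$ is just a cleaner formalization of the paper's argument, not a different one.
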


\begin{proof}

The coupling procedure in Algorithm \ref{alg:couplepimh} uses a single proposal $p_{N}^{*}$ and samples $\mathfrak{u}\sim\mathcal{U}[0,1]$, with proposals being accepted according to:
\begin{align*}
\text{if}\;\mathfrak{u} & \le1\wedge\frac{p_{N}^{*}}{p_{N}^{(n)}}\;\text{then}\;p_{N}^{(n+1)}=p_{N}^{*}\text{, else}\;p_{N}^{(n+1)}=p_{N}^{(n)},\\
\text{if}\; \mathfrak{u} & \le1\wedge\frac{p_{N}^{*}}{\tilde{p}_{N}^{(n-1)}}\;\text{then}\;\tilde{p}_{N}^{(n)}=p_{N}^{*}\text{, else}\;\tilde{p}_{N}^{(n)}=\tilde{p}_{N}^{(n-1)}.
\end{align*}
We see that if $p_{N}^{(n)}\ge\tilde{p}_{N}^{(n-1)}$
then either
\begin{enumerate}
\item $p_{N}^{(n+1)}=p_{N}^{*}$ in which case
\[
\mathfrak{u}\le1\wedge\frac{p_{N}^{*}}{p_{N}^{(n)}}\implies\mathfrak{u}\le1\wedge\frac{p_{N}^{*}}{\tilde{p}_{N}^{(n-1)}}
\]
as $p_{N}^{(n)}\ge\tilde{p}_{N}^{(n-1)}$ so that $p_{N}^{(n+1)}=\tilde{p}_{N}^{(n)}=p_{N}^{*}$
(i.e. the chains meet).
\item $p_{N}^{(n+1)}=p_{N}^{(n)}$ and so $p_{N}^{*}\le p_{N}^{(n)}$. In this case, either $\tilde{p}_{N}^{(n)}=p_{N}^{*}$, and
so $p_{N}^{(n+1)}\ge\tilde{p}_{N}^{(n)}$,
or $\tilde{p}_{N}^{(n)}=\tilde{p}_{N}^{(n-1)}$ in which case both chains have rejected $p_{N}^{*}$ and
the ordering is preserved.
\end{enumerate}
Finally, from the initialization of the procedure,
we have $p_N^{(1)}\geq \tilde{p}_N^{(0)}$ because the initial
state of the second chain is used as a proposal in the first iteration
of the first chain.
\end{proof}

From the above reasoning we see that the chains meet when the first chain accepts
its proposal for the first time, as the second chain then necessarily accepts the same proposal.

From the initial state with likelihood estimate $p_N^{(0)}$,
the acceptance probability of the first chain is
$\int 1\wedge(p_{N}/p_{N}^{(0)}) \bar{g}(p_{N})\mathrm{d}p_{N}$,
with $\bar{g}$ denoting the density of the PF likelihood estimator $p_N$.
Thus, the time to the first acceptance
follows a Geometric distribution with success probability
$\int 1\wedge(p_{N}/p_{N}^{(0)}) \bar{g}(p_{N})\mathrm{d}p_{N}$.
The result stated in Proposition \ref{prop:mtdist} follows
when rewriting the problem using the error of the log-likelihood estimator
$\log\{p_{N}(y_{1:T})/p(y_{1:T})\}$.

\subsection{Integrated autocorrelation time for various test functions}
\label{sec:inefficiency}
We show here experimentally that $\text{IF}(h)$ is approximately proportional to $\text{IF}(\sigma)$ for various test functions: $h_1:x_{1:T}\mapsto x_1$,  $h_2:x_{1:T}\mapsto x_T$,  $h_3:x_{1:T}\mapsto \sum_t x_t$ and  $h_4:x_{1:T}\mapsto \sum_t x_t^2$. This is illustrated in Figure \ref{fig:toyineff} where $\text{IF}(h)$ is displayed for a range of $N$  against $\text{IF}(\sigma)$  over the corresponding range of $\sigma$.

\begin{figure}[ht]
\begin{centering}
\includegraphics[scale=0.4]{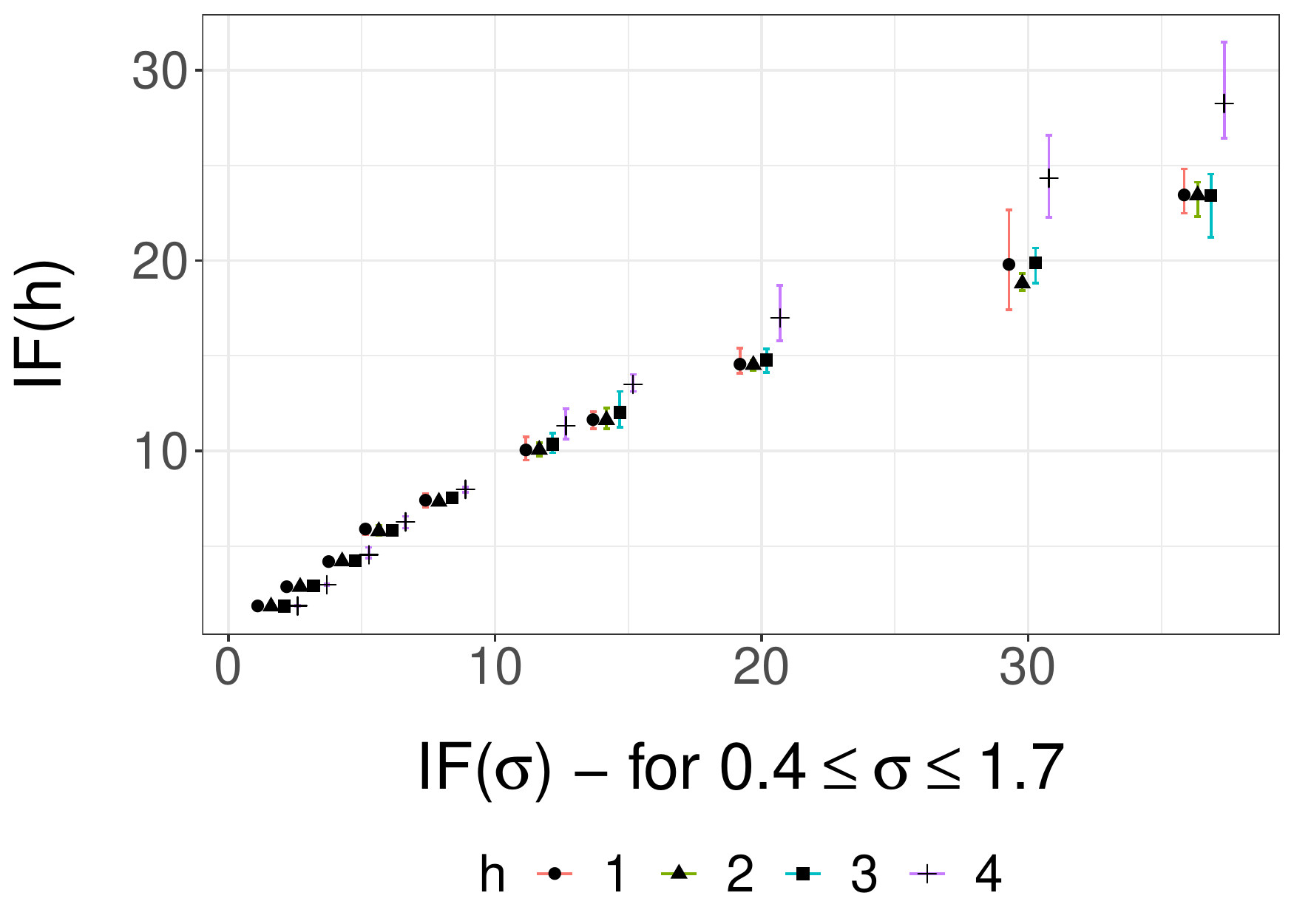}
\caption{Inefficiency ${\text{IF}[h_i]}$ versus ${\text{IF}[\sigma]}$, with markers indicating the test functions $h_1,h_2,h_3,h_4$. The vertical axis scale is relative, depending on the test function.}
\label{fig:toyineff}
\end{centering}
\end{figure}

\begin{figure}
\begin{centering}
\includegraphics[scale=0.4]{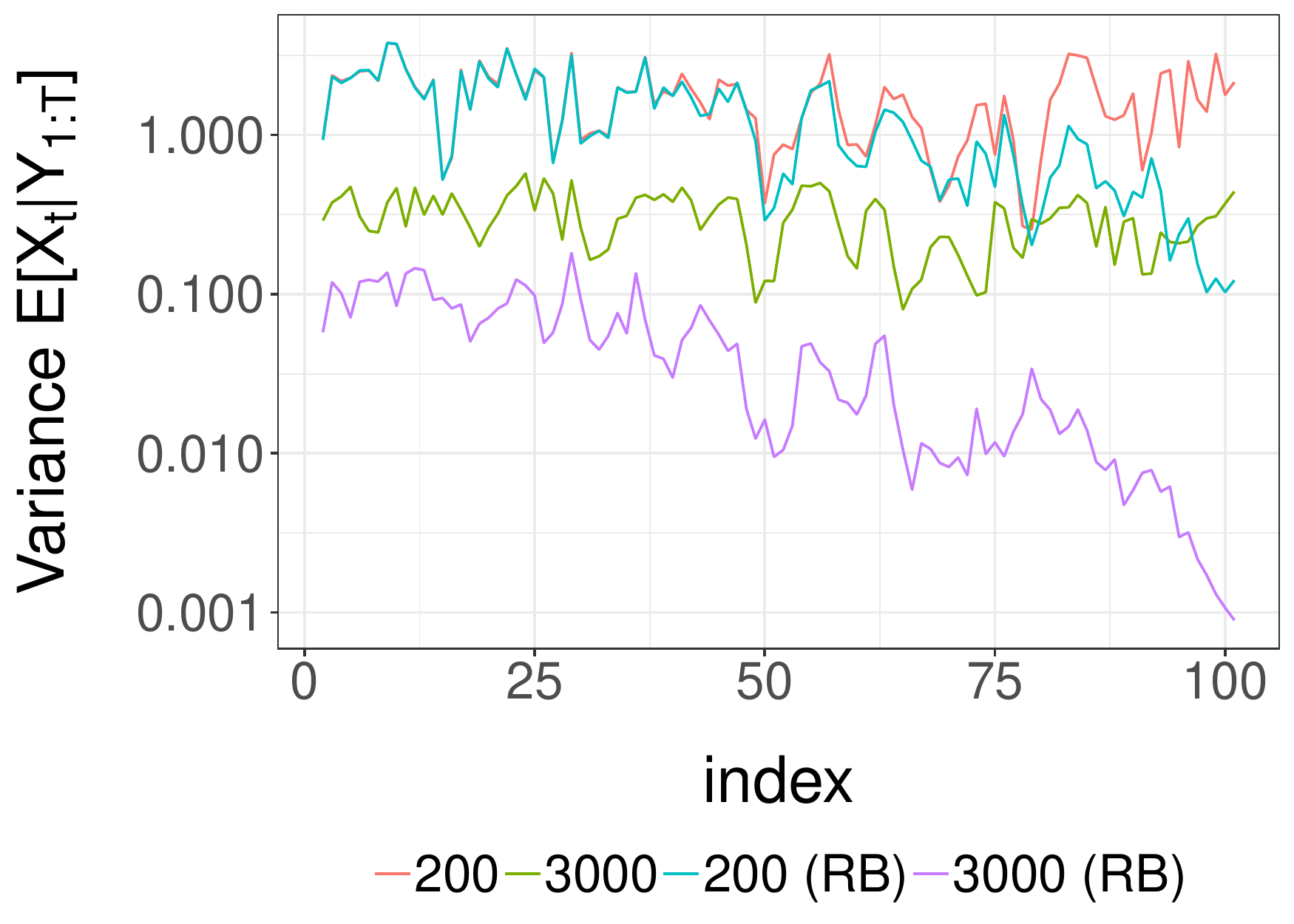}
\end{centering}
\caption{Empirical variance of unbiased estimators of $\mathbb{E}(X_{1,\Delta t}\vert y_{1:T})$: $H_{0:0}$ and Rao-Blackwellised (RB) estimator $\bar{H}_{0:0}$ for stochastic kinetic model.\label{fig:rbcomp}}
\end{figure}

\subsection{Rao-Blackwellisation for stochastic kinetic model}
\label{subsec:raoblackstochkinetic}
We demonstrate here the gains arising from the use of a Rao-Blackwellized estimator detailed in Section \ref{sec:coupledPIMH}.
We display in Figure \ref{fig:rbcomp} the variance of the two unbiased estimators of $\mathbb{E}(X_{1,\Delta t}\vert y_{1:T})$ for $t=\Delta,...,T\Delta$ and $T=100$ for the latent Markov jump process and two different values of $N$, $200$ and $3,000$.
In both cases we set $k=m=0$. As expected $\bar{H}_{0:0}$ outperforms $H_{0:0}$ but the benefits are much higher for $t$ close to $T$ than when $t$ is close to $1$.
For example, we see that for $N=200$ the estimators $H_{0:0}$ and $\bar{H}_{0:0}$ coincide for $t\leq 35$.
This is an expected consequence of the particle path degeneracy problem \cite{doucet2009tutorial,jacob2015path,kantas2015particle}, with many particles $(X_{1:T}^{i})_{i\in[N]}$ obtained by the PF at time $T$ sharing common ancestors for $t$ close to 1 when $N$ is too small; see \cite{jacob2015path} for results on the corresponding coalescent time.

\begin{figure}
\begin{centering}
\subfloat[Daily returns of S\&P 500 data]{\begin{centering}
\includegraphics[scale=0.4]{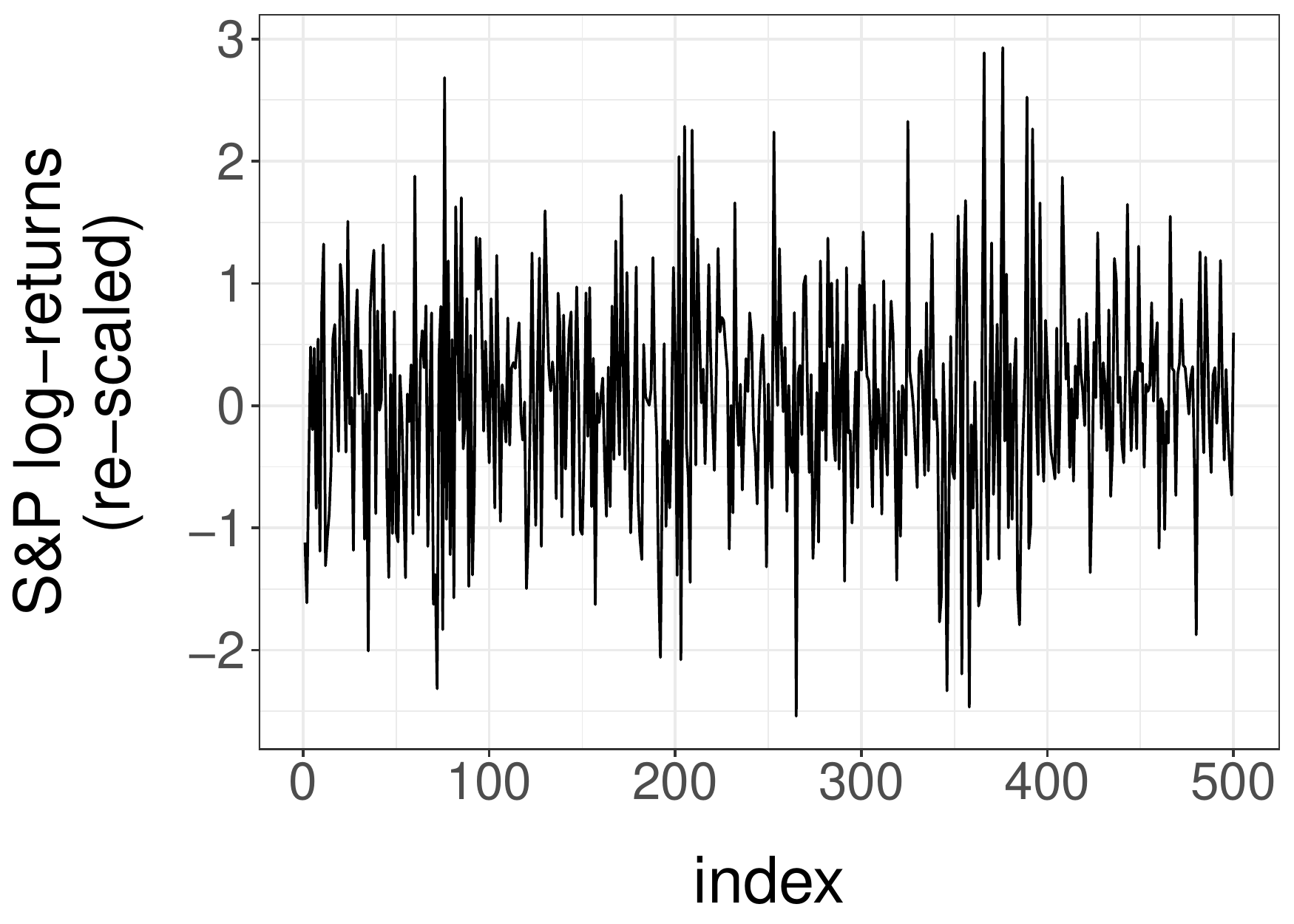}
\par\end{centering}
\centering{}\label{fig:data-1}}

\subfloat[Particle estimates of the log-likelihood function]{\begin{centering}
\includegraphics[scale=0.4]{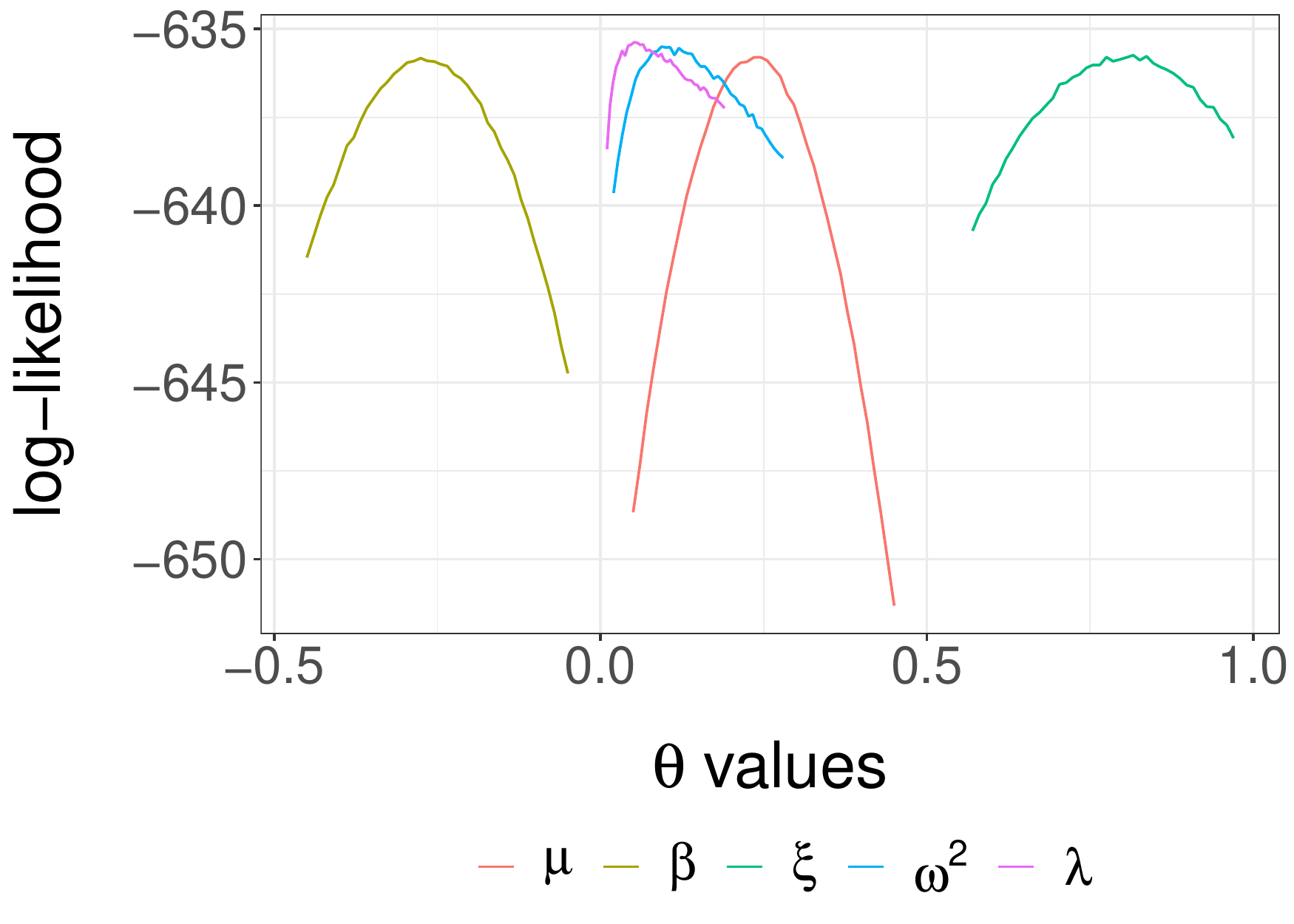}
\par\end{centering}
\centering{}\label{fig:ll-1}}
\par\end{centering}
\centering{}\caption{Data and parameter estimation for L\'evy-driven stochastic volatility
model.}
\end{figure}

\subsection{Data and parameter estimation}
\label{subsec:datasummary}
The data used comprised of $T=500$ log-returns of the S\&P 500, starting
from the $3^{rd}$ January 2005, with the scaled data taken from the
stochastic volatility example used to demonstrate $\text{SMC}^{2}$
in \cite{chopin2013smc2}. We plot the raw data in Figure \ref{fig:data-1}.
Parameters were estimated using a two-stage procedure, with $\text{SMC}^{2}$
used to find a region of high marginal likelihood under the model.
Further refinement was performed to compute the maximum likelihood estimator (MLE)
$\hat{\theta}$ of $\theta=(\mu,\beta,\xi,\omega^{2},\lambda)$ using a grid search around values close to the optimum
using $N=10,000$ particles. We obtained $\hat{\theta}=(0.24,-0.28,0.82,0.09,0.05)$.
Likelihood curves around the optimal values are shown in Figure \ref{fig:ll-1},
where for each parameter component the log-likelihood was varied while
keeping the other parameters fixed at $\hat{\theta}$.

\section{Application of unbiased estimation to SMC samplers}
\label{sec:smcsamplers}
SMC samplers are a class of SMC algorithms that can be used in Bayesian inference
to approximate expectations w.r.t. complex posteriors for static models \cite{delmoral2006sequential}.
We show here how we can directly use the methodology proposed in this paper to obtain unbiased estimators
of these expectations.

\subsection{Bayesian computation using SMC samplers}
Assume one is interested in sampling from the posterior density $\pi(x)\propto\nu(x) L(x)$ where
$\nu(x)$  the prior density w.r.t. a suitable dominating measure and $L(x)$ is the likelihood. We also assume that one can
sample from  $\nu$. To approximate $\pi$, a specific version of SMC samplers introduces a sequence of $T-1$ intermediate
densities $\pi_t$ for $t=2,...,T$ bridging $\nu$ to $\pi$ using
\[
\gamma_{t}(x)=\nu(x) L^{\beta_t}(x),\qquad\pi_{t}(x)=\frac{\gamma_{t}(x)}{\mathcal{Z}_t},
\]
where $\beta_1=0<\beta_2<...<\beta_{T}=1$. The choice of the sequence $\{\beta_t:t=2,...,T-1\}$ can be guided using a preliminary adaptive SMC scheme, as in \cite{zhou2016toward}, which should subsequently be fixed to preserve unbiasedness of the normalizing constant estimate and validity of the resulting PIMH.
In SMC samplers, particles are initialized at time $t=1$ by sampling from the prior ensuring $w_1(x_1)=1$. At time $t\geq2$, particles are sampled according to an MCMC
kernel leaving $\pi_{t-1}$ invariant and are then weighted according to
\[
w_t(x_{t-1},x_{t})=\frac{\gamma_{t}(x_{t-1})}{\gamma_{t-1}(x_{t-1})}=L^{\beta_{t}-\beta_{t-1}}(x_{t-1}).
\]
Particles are resampled according to these weights and we set
\[
\mathcal{Z}_{t,N}=\mathcal{Z}_{t-1,N}\cdot\frac{1}{N}\sum_{i=1}^{N}w_{t}(X_{t-1}^{A_{t-1}^{i}},X_{t}^{i}),
\]
with $\mathcal{Z}_{1,N}=1$.
At time $T$, $\pi_{N}(\mathrm{d}x_{T}):=\sum_{i=1}^{N}W_{T}^{i}\delta_{X_{T}^{i}}(\mathrm{d}x_{T})$ provides a Monte Carlo approximation of the distribution $\pi$ and $\mathcal{Z}_{T,N}$ plays the role of $p_N(y_{1:T})$, approximating the normalizing constant
$\mathcal{Z}_T$ of $\pi=\pi_T$. If no resampling is used, this specific version of SMC samplers coincides with AIS \cite{neal2001annealed} in which case $\mathcal{Z}_{T,N}$ is given by the average of the product of the incremental weights from time $t=1$ to $t=T$ instead of the product of the averaged incremental weights. We can use this SMC sampler algorithm or AIS directly within the coupled PIMH scheme, replacing $p_N(y_{1:T})$ by $\mathcal{Z}_{T,N}$ in the acceptance probabilities.  We see that $\sup_{(x,x')\in\mathsf{X}^{2}}w_t(x,x')<\infty$
provided that $\sup_{x\in\mathsf{X}}L(x)<\infty$. Under this condition, if Assumption \ref{assumption:marginaldistributionsjoint}
is satisfied then the estimator $\bar{H}_{k:m}$ of $\pi(h)$ is unbiased and has finite variance and finite expected
cost.

\subsection{Numerical example}
We use here coupled PIMH to debias expectations w.r.t. the posterior
distribution for a Bayesian mixture model discussed in \cite{lee2010utility}.
We have
\[
L(x)=\prod_{n=1}^{M}\left(\frac{1}{D}\sum_{i=1}^{D}\mathcal{N}(y_{n};x^{i},\sigma^{2})\right)
\]
with $x:=(x^{1},...,x^{D})\in\mathbb{R}^{D}$ constituting the unknown mean components. We consider here $D=4$ mixture components and $M=100$ observations.
A uniform prior distribution is placed on $x$ over the hypercube $[-10,10]^{D}$. The resulting posterior distribution is multimodal.
We set $\sigma=1$ and simulate observations from the model with true values $x^{*}=(-3,0,3,6)$.
We adopt a symmetric random walk for the Metropolis--Hastings proposals
with identity covariance and pick $\beta_{t}=\left(\frac{t-1}{T-1}\right)^{2}$
for $T=200$.

We simulate $10,000$ estimators with $m=64$ and $N=100$, after
which we are able to estimate variance of test functions for a range
of values of $k$ and $m$ noting that there will be some correlation
introduced between estimators. The results are shown in Figure \ref{fig:smcsampler}
where we plot the meeting times of the unbiased estimators in Figure
\ref{fig:mtsmc} and the variance of the estimators for a range of
values of $m$ in Figure \ref{fig:varestimatorssmc} using $h:x\mapsto x^{1}+x^{2}+(x^{1})^{2}+(x^{2})^{2}$.
Uncertainty in the estimated values of $\mathbb{V}[\bar{H}_{k:m}(h)]$ was obtained using $1,000$ bootstrap samples, resampling $10,000$ of the $10,000$ unbiased estimators with replacement.
The figure shows the variance of these estimators for a range of values
of $m\in\{4,8,...,64\}$ while varying $k\in\{0,...,m-1\}$. We see,
firstly, as expected that as $m$ increases the variance of the estimators
reduces. Secondly, for each value of $m$ we see that there exists
an optimal value of $k$, however, as $m$ increases the optimum becomes
less pronounced, suggesting that as $m$ increases there is a degree
of insensitivity to the choice of $k$. Finally, for the range of
$m$ considered, the optimal values of $k$ appear in a comparatively
small interval close to the origin, suggesting that
it is not necessary to use large values of $k$ to reduce the variance contribution arising from the bias correction.
\begin{figure}
\subfloat[Empirical distribution of meeting times for SMC sampler with $N=100$ over 10,000 independent runs. The estimated $95^{th}$
and $99^{th}$ percentiles were 6 and 13 respectively.]{\begin{centering}
\includegraphics[scale=0.4]{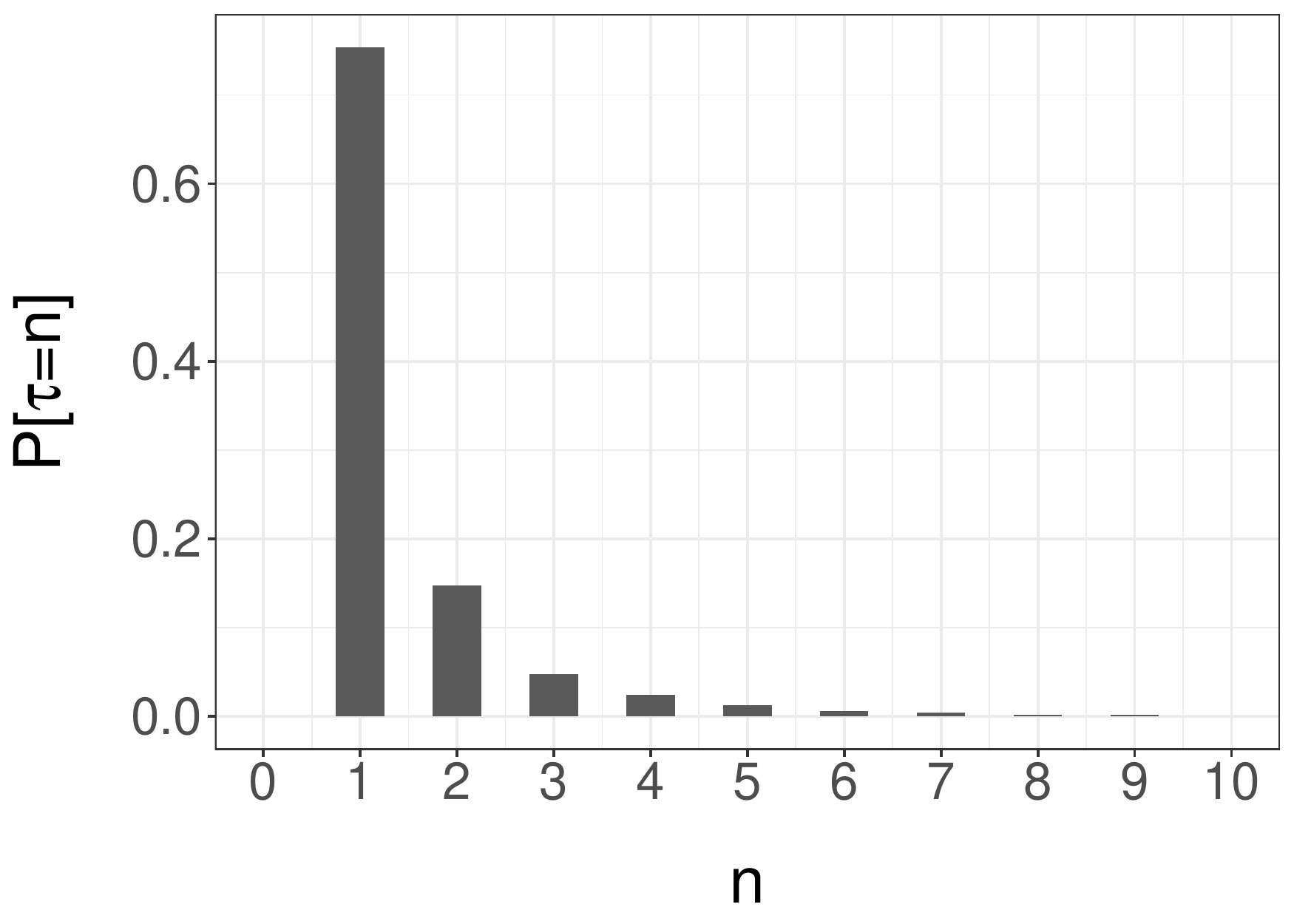}
\par\end{centering}
\label{fig:mtsmc}}

\subfloat[{$\mathbb{V}[\bar{H}_{k:m}]$ of Rao-Blackwellised unbiased estimators of $\pi(h)$
for SMC sampler as a function of $k$ for a range of values of $m$.
The shaded regions correspond to the $1^{st}$ and $99^{th}$ percentiles of the variance estimator.}]{\begin{centering}
\includegraphics[scale=0.4]{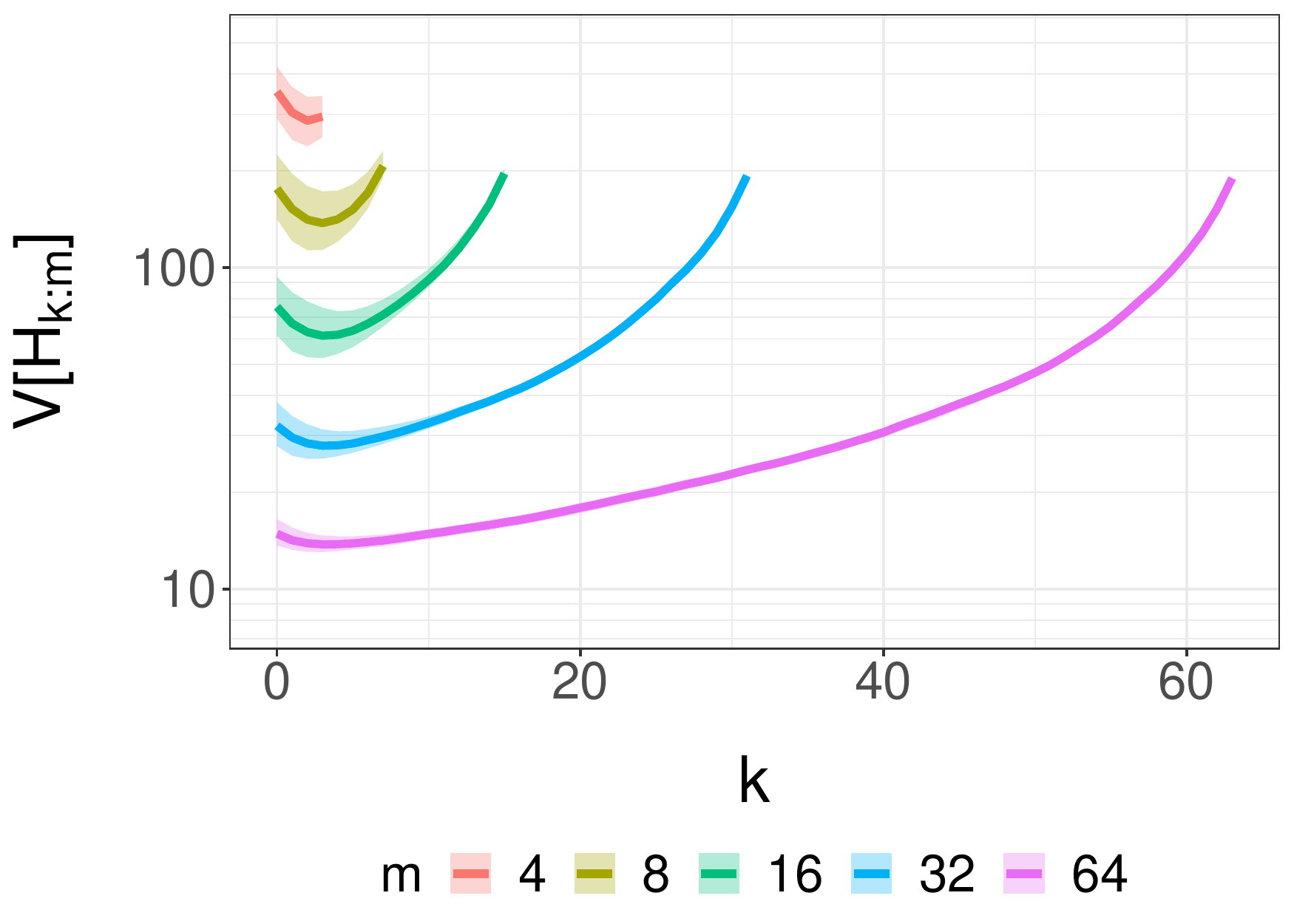}
\end{centering}
\label{fig:varestimatorssmc}}
\centering{}\caption{SMC sampler unbiased estimators}
\label{fig:smcsampler}
\end{figure}

\end{document}